\DeclareMathAlphabet{\mathpzc}{OT1}{pzc}{m}{it}
\numberwithin{equation}{section}
\newtheoremstyle{theorstyle}				
{5mm}
{4mm}
{}
{}
{\bfseries}
{\normalfont :}
{ }
{}
\theoremstyle{theorstyle}			   		
\newtheorem{theorem}{Theorem}[section]
\newtheorem{concl}[theorem]{Conclusion}
\newtheorem{prop}[theorem]{Proposition}
\newtheoremstyle{defistyle}				
{5mm}
{8mm}
{}
{}
{\bfseries}
{\normalfont :}
{ }
{}
\theoremstyle{defistyle}			   		
\newtheorem{example}[theorem]{Example}
\theoremstyle{plain}						
\newcommand{\IE}{\mathbb{E}}
\newcommand{\IN}{\mathbb{N}}
\newcommand{\IR}{\mathbb{R}}
\newcommand{\IC}{\mathbb{C}}
\renewcommand\Re{\operatorname{\mathfrak{Re}}}
\renewcommand\Im{\operatorname{\mathfrak{Im}}}
\newcommand{\mathb}{\mathpzc{b}}
\newcommand{\mathm}{\mathpzc{m}}
\newcommand{\mathX}{\mathpzc{X}}
\newcommand{\calF}{\mathcal{F}}
\newcommand{\calL}{\mathcal{L}}
\newcommand{\calM}{\mathcal{M}}
\newcommand{\calN}{\mathcal{N}}
\newcommand{\calU}{\mathcal{U}}
\newcommand{\calX}{\mathcal{X}}
\newcommand{\sgn}{\text{sgn}}
\newcommand{\Res}{\text{Res}}
\newcommand{\Dawson}{\text{F}}
\newcommand{\Hilbert}{\text{H}}
\newcommand{\ii}{\mathrm{i}}
\title{\Large Network Structure and \\ Counterparty Credit Risk}
\author{\normalsize \textsc{Alexander von Felbert}\thanks{University of Mannheim, Lehrstuhl fuer Wirtschaftsmathematik I, alexander@mathematik-netz.de}}
\date{\normalsize Munich, June 2015}
\providecommand{\keywords}[1]{\small\textbf{\textsc{Keywords --}} #1}
\begin{document}

\maketitle

	\begin{abstract}
     In this paper we offer a novel type of network model which can capture the \textit{precise structure} of a 
     financial market based, for example, on empirical findings. With the attached stochastic framework it is further possible 
     to study how an arbitrary network structure and its expected \textit{counterparty credit risk} are analytically related to each other. 
     This allows us, for the first time, to \textit{model} the precise structure of an arbitrary financial market and to derive the 
     corresponding expected exposure in a closed-form expression. It further enables us to draw implications for the study of \textit{systemic risk}. 
     We apply the powerful theory of characteristic functions and Hilbert transforms. The latter concept is used to express the characteristic
     function (c.f.) of the random variable (r.v.) $\max(Y, 0)$ in terms of the c.f. of the r.v. $Y$. The present paper applies this concept for the
     first time in mathematical finance. 
     We then characterise Eulerian digraphs as distinguished exposure structures and show that considering the precise network
     structures is crucial for the study of systemic risk. The introduced network model is then applied to study the features of an over-the-counter and
     a centrally cleared market. We also give a more general answer to the question of whether it is more advantageous for the overall counterparty 
     credit risk to clear via a central counterparty or classically bilateral between the two involved counterparties. 
     We then show that the exact market structure is a crucial factor in answering the raised question.
	\end{abstract}

     \keywords{Counterparty credit risk, systemic risk, network structure, network model, analytic function, digraph, graph, Eulerian, 
               characteristic function, Hilbert transform, analytic signal, bilateral \& multilateral netting, advantageousness 
               of a central counterparty.}

\newpage

\section{Introduction}
One risk type that has gained particular attention in recent years, largely due to the credit and financial crisis that started in 2007, is \textit{counterparty credit risk}. On the one hand, over-the-counter (OTC)  markets have been seen to respond heavily to financial distress. On the other hand, centrally cleared markets continued to trade without major disruptions even at the height of the financial crisis\footnote{See, for instance, \cite{Rosenthal2001} or \cite{ECB2009}.}.\ Following the impact of this crisis, the G20 countries therefore decided to thoroughly revamp the OTC derivatives market in 2009 in order to reduce the immanent \textit{systemic risk}. In Europe the reforms are implemented through the so-called \textsl{European Market Infrastructure Regulation}\footnote{See \href{http://eur-lex.europa.eu/LexUriServ/LexUriServ.do?uri=OJ:L:2012:201:0001:0059:EN:PDF}{http://eur-lex.europa.eu/LexUriServ/LexUriServ.do?uri=OJ:L:2012:201:0001:0059:EN:PDF}.} (EMIR). The US equivalent is called the \textsl{Dodd-Frank Act}\footnote{See \href{http://www.gpo.gov/fdsys/pkg/PLAW-111publ203/pdf/PLAW-111publ203.pdf}{http://www.gpo.gov/fdsys/pkg/PLAW-111publ203/pdf/PLAW-111publ203.pdf}.}. At the core of both new regulations is the obligation of the market participants to clear their standard OTC derivatives through a central counterparty (CCP). Non-centrally cleared contracts should be subject to higher capital requirements. These measures are designed to comprehensively change the market structure. Today, many classes of derivatives are already being cleared through CCPs, for example, \textsl{LCH.Clearnet}\footnote{\href{http://www.lchclearnet.com/}{http://www.lchclearnet.com/}} clears interest rate swaps, and \textsl{ICE Clear}\footnote{\href{https://www.theice.com/}{https://www.theice.com/}} or \textsl{CME}\footnote{\href{http://www.cmegroup.com/}{http://www.cmegroup.com/}} clear credit default swaps.\\

Several authors such as \textsc{Nier} et al. \cite{Nier2007}, \textsc{Moussa} \cite{Moussa2011}, \textsc{Rosenthal} \cite{Rosenthal2001} or \textsc{Gai} et al. \cite{Gai2010} emphasise the importance of the precise \textit{market structure} in the context of studying counterparty credit risk and therefore systemic risk. Furthermore, empirical studies have shown that network structures in different countries are quite varied.\footnote{A comprehensive overview of these studies and the used network models is provided by \cite{Cont2013} and \cite{Moussa2011}.} Despite these facts, most previous models in the context of counterparty or systemic risk have assumed a simplistic network structure\footnote{See section 1.3.1 in \cite{Moussa2011} for an overview.} such as complete or star graphs. \textsc{Duffie} \& \textsc{Zhu} \cite{Duffie2011} or \textsc{Cont} \& \textsc{Kokholm} \cite{Cont2011}, for instance, assume a \textit{complete graph}. These simplistic network structures, however, are not able to capture empirical findings such as \cite{Upper2004} or as in chapter 4 in \cite{Moussa2011} and tend to over- or underestimate the overall risk.\\

In section \ref{sec:Model} we present a network model, which is capable of capturing the \textit{precise structure} of \textit{any} given financial market based, for example, on empirical findings. We further introduce a stochastic framework to study how different network structures and counterparty credit risk are analytically connected to each other. This allows us, for the first time, to model the precise structure of an arbitrary financial market and to derive the corresponding expected exposure in a closed-form expression. We take the perspective of a regulator and are mainly interested in the overall risk of a market for a typical day in the future. \ In a first step, we incorporate position uncertainty in size and direction in one single distribution. Our model is then capable of dealing with an \textit{arbitrary graph} as well as accounting for a wide range of distributions that represent a position between two counterparts. In a second step, we use conditional probabilities in order to extend this approach to \textit{arbitrary digraphs}, where the size and the direction of all positions can be determined \textit{independently}. That is, the distribution that represents the position value and therefore the exposure can be chosen independently from the exact structure and thus perfectly adapts to the individual circumstances of a given network. \ To this end, the model only assumes that each non-zero position is distributed identically by an arbitrary symmetrical distribution with an existing mean. We have not incorporated correlations, as suggested by \textsc{Cont} and \textsc{Kokholm} \cite{Cont2011}, because we use the powerful theory of \textit{characteristic functions} (see \textsc{Lukacs} \cite{Lukacs1970}) for analysing sums of \textit{independent} random variables. By using this theory, we deduce how we can analytically capture the process of netting in regards to the associated random variables (r.v.s). Afterwards, we show how to determine the expected credit exposure of a netted position by using the so-called \textit{Hilbert transform} (see \textsc{King} \cite{King2009}). For that purpose, we use expressions for the c.f. of the random variable $\max(Y, 0)$ in terms of the c.f. of the r.v. $Y$ based on results of \textsc{Pinelis} \cite{Pinelis2013}. The present paper applies this concept for the first time in mathematical finance.
Nevertheless, Hilbert transform methods have been used before in mathematical finance, for instance, by \textsc{Feng} and \textsc{Linetsky} \cite{Feng2008} to price discretly-monitored single- and double-barrier options in Levy process-based models. An overview of Fourier transform methods in finance can be found in \cite{Cherubini2010}.\\

In section \ref{sec:AuxiliaryResults} we provide auxiliary results which can be used for the application of the network model. In the first subsection we prove Proposition \ref{prop:Hilbert}, which contains two very useful formulas about Hilbert transforms, by using the residue theorem. These formulas are particularly useful for calculating the Hilbert transform of intricate functions. We also study the so-called positive and negative absolute values of a distribution in section \ref{subsec:PositiveNegativePart}. Both types are used to represent the direction of a position. The term 'analytic signal', known from the field of \textit{signal processing}, is then introduced, and we show in Proposition \ref{prop:AnalyticSignal} that the positive absolute value of a distribution is an analytic signal. \ Some of these insights are used in section \ref{sec:StructureTheorem} to prove both structure theorems \ref{theorem:digraphs} and \ref{theorem:graphs}. The theorems basically state that Eulerian digraphs are distinguished exposure structures and that digraphs possess different characteristics compared to graphs in the context of counterparty credit risk. We further reveal that different structures within graphs or digraphs can have a significantly different impact on the overall counterparty risk.\\ 

We then apply our network model and its stochastic framework in section \ref{sec:Application} to study the features of bilateral and multilateral clearing and to give a more general answer to the question raised by \textsc{Duffie} \& \textsc{Zhu} \cite{Duffie2011}, of whether it is more advantageous for the overall counterparty risk to clear via a CCP or classically bilateral between the two involved counterparties. \ The two authors model the counterparty credit risk of \textit{each} market participant and for both netting types as an independent and standard normal distributed random variable in order to answer this question. This web of obligations and claims, described in \cite{Duffie2011}, can be illustrated as a  complete graph. With the introduced network model we can answer this question not only for complete graphs, but for arbitrary graphs and digraphs as well. Moreover, the network model introduced in section \ref{sec:Model} is not constrained to the normal distribution. It can also employ any (symmetric) distribution with a defined expected value. We finally show in section \ref{subsec:Advantageousness} that the question of the advantageousness of one netting type also depends heavily on the precise structure of the market by comparing the implications of our model and the model used by \cite{Duffie2011}.

\section{Network Model for a Financial Market}
\label{sec:Model}
\textit{Counterparty credit risk}, often known just as \textit{counterparty risk} or \textit{default risk}, is usually defined as the risk that the entity with whom one has entered into a financial contract will fail to fulfill his side of the contractual obligations.\footnote{This kind of risk arises in almost every financial market such as the derivatives market, the interbank market, the money market, or the repo market.} \textit{Credit exposure} or simply \textit{exposure} defines the actual loss in the event of a counterparty defaulting. \ In the next two subsections we explain the basic settings for a general financial market which is subject to counterparty risk. We start with a market modelled as a graph where size and the direction of a position is determined by a random variable. We extend this model by using conditional probabilities in order to determine the direction well before any observation is drawn. This allows us to model the exact directed network structure of any financial market, where only the position size is a matter of coincidence. Furthermore, we introduce the stochastic framework and a set of formulas to calculate the expected credit exposure.
\subsection{Market Settings}
\label{subsec:MarketSettings}
We consider a financial market $\calM$ with $N\in \IN$ \textit{participants} and $K\in \IN$ different \textit{classes of derivatives} $C:=\{1, \ldots, K\}$. Derivatives classes could be defined by underlying asset classes, but we could also aggregate different underlying asset classes to one derivatives class. Let $k\in C$ and $m_k\in \IN$. We model each financial sub-market of derivatives class $k$ as a single \textit{graph} $G_k=(V=\{v_1, \ldots, v_N\}, \ E^k=\{e^k_1, \ldots, e^k_{m_k}\})$. It consists of a non-empty finite set $V = V (G_k)$ of elements called \textit{vertices} and a finite set $E^k = E(G_k)$ of unordered pairs of distinct vertices called \textit{edges}. The vertices of a given graph $G_k$ represent the $N$ \textit{market participants}, and the edges of $E^k$ stand for the \textit{trade positions} or simply \textit{positions} between two different counterparts within derivatives class $k\in C$. A \textit{trade position} is the net value of a bilateral portfolio within derivatives class $k\in C$. \ Furthermore, a financial market $\calM$ is usually endowed with a set of \textit{market conventions}, that apply to each of the $N$ participants within a class of derivatives. For instance, the type of netting or the day-count conventions are typical market conventions. %
We write $\widehat{E}:= \bigcupdot_{k \in C}{E^k}$ for the \textit{compounded set of edges}\footnote{This is, for the sake of simplicity, an abuse of notation.} and $G:=(V, \widehat{E})$ shall represent all graphs $G_k$ on the common set of vertices $V$. Here, $\bigcupdot$ stands for the disjoint union of sets.\\

We assume that the uncertainty of the value of a future bilateral trade position can be represented by a real-valued \textit{probability distribution} $P$, which is symmetric around the origin and with zero mean. That is, we model the uncertainty of size and direction of a position 
of a counterpart $v\in V$ relative to counterpart $w\in V\setminus \{v\}$ in derivatives class $k\in C$ by a r.v. $X^{(k)}_{v, w} \sim P$. A \textit{realisation} of the r.v. $X^{(k)}_{v, w}$ is denoted by $x^{(k)}_{v, w}\in \IR$. If $x^{(k)}_{v,w}$ is positive then $v$ will claim this amount from $w$, but if $x^{(k)}_{v,w}$ is negative then $v$ will owe the amount of $x^{(k)}_{v,w}$ to $w$. The \textit{direction} as well as the associated \textit{size} or \textit{weight} of an edge $\{v, w\}\in E^k$ is then defined by the observation $x^{(k)}_{v,w}$ of the random experiment.\\

The given graph $G_k$ supplemented by the directions of each of its corresponding realisation $x^{(k)}_{v,w}$ represents a so-called \textit{directed graph} or just \textit{digraph} $D_k=(V, A^k=\{a^k_1, \ldots, a^k_{m_k}\})$. $D_k$ also comprises the vertex set $V$ and a set $A^k \subseteq (V \times V)$ of ordered pairs of different vertices called \textit{arrows}, as well as two maps $h:A \rightarrow V$ and $t:A \rightarrow V$ assigning to every arrow $a\in A$ a \textit{head vertex} $h(a)$ and a \textit{tail vertex} $t(a)$. Within a digraph we know that the creditor $h(a)\in V$ claims the position value from the debtor $t(a)\in V$ for all $a\in A^k$. A digraph $D_k=(V, A^k)$ is called an \textit{orientation} of a graph $G_k=(V, E^k)$, if each edge $\{v, w\}\in E^k$ is replaced by one of the ordered pairs $(v, w)$ or $(w, v)$, i.e., the digraph $D_k$ along with its realisations is an orientation of $G_k$. \\

We call a graph or digraph \textit{weighted}, or \textit{network}, if each link of each derivatives class $k$ is assigned with a real number. A realisation $x^{(k)}_{v,w}\in \IR$ of the r.v. $X^{(k)}_{v,w}$ also provides a weight\footnote{Also called size within this document.} for each edge of $G_k$. That is, the outcomes of the r.v.s $X^{(k)}_{v,w}$ define a weighted orientation of $G_k$. Analogously, we write $\widehat{A}:= \bigcupdot_{k \in C}{A^k}$ for the \textit{compounded set of arrows} and $D:=(V, \widehat{A})$ shall represent all digraphs $D_k$ with $k\in C$ on the common set of vertices $V$.\\

If we only want to model the weights of a digraph with pre-defined directions by r.v.s, we can use the following technique. We start over with an undirected graph $G_k=(V, E^k)$. Each r.v. $X_{v, w}^{(k)} \sim P$ shall be supplemented with a direction in form of a condition, denoted by $\pm$,  defining whether the direction shall be positive or negative. We set
\begin{equation*}
	{}^\pm|X^{(k)}_{v,w}| \ := \  
     \left\{
	\begin{aligned}
		+|X^{(k)}_{v,w}| \ , \quad  \text{if } (w, v)\in A^k \\
		-|X^{(k)}_{v,w}| \ , \quad  \text{if } (v, w)\in A^k 
	\end{aligned} \ 
	\right. 
\end{equation*}
and say that $+|X^{(k)}_{v,w}|=|X^{(k)}_{v,w}|$ is the \textit{positive absolute value} and $-|X^{(k)}_{v,w}|$ the \textit{negative absolute value} of the distribution $P$ or of the r.v. $X^{(k)}_{v,w}$. That is, we take the absolute value of the symmetric r.v. $X^{(k)}_{v,w}$ and multiply the outcome by $\pm 1$. Here, the factor $+1$ stands for an incoming and $-1$ for an outgoing arrow of the vertex $v\in V$. The edge becomes an arrow, and the associated distribution, denoted by ${}^{\pm}|X^{(k)}_{v,w}| \sim {}^{\pm}|P|$, is not symmetric anymore. It contains either only positive or only negative outcomes. For instance, the probability density function $f$ of the r.v. $X^{(k)}_{v,w} \sim \calN(0,1)$ and the density function $f^+$ of its  positive absolute value $|X^{(k)}_{v,w}|$ is given by $\sqrt{\frac{2}{\pi}}e^{-\frac{x^2}{2}}$ for $x>0$ and $0$ otherwise. We denote the \textit{normal distribution} by $\calN(\mu, \sigma)$, where $\mu$ is the mean and $\sigma$ the standard deviation. Because of the definition of $|X^{(k)}_{v,w}|$, we already know that $v$ claims the trade position from $w$ before any observation is drawn from $|X^{(k)}_{v,w}|$. If we apply this technique to all edges it allows us to study any orientation $D_k$ of the given graph $G_k$. \ In the following we will, for the sake of simplicity, first introduce the market settings for directed graphs. Afterwards we adapt the notation to undirected graphs simply by forgetting the direction of the arrows or positions.

\subsection{Closeout Netting and Credit Exposure}
\label{subsec:NettingExposure}
%
So-called \textit{closeout netting agreements} have been a common tool for reducing the credit exposure of an entire market $\calM$. \  A closeout netting agreement is a legally binding contract between two parties. It stipulates that if one counterparty defaults, legal obligations arising from derivative transactions covered by the netting agreement must be based on the net value of such transactions. We do not consider benefits of collateral and default recovery.\\

The applied closeout netting convention of the market $\calM$ defines exactly what trades of an arbitrary counterparty $v\in V$ can be aggregated into one net position in the case of a default. Expressed in terms of set theory, this means that the applied netting convention defines a partition\footnote{See \cite{Diestel2005}, Chapter 1 for the general definition of a partition of a given set.}%
$\calL_v$ of all links that are incident\footnote{The vertex $v$ of a graph $G_k$ is \textit{incident} with an edge $e\in E^k$ if $v\in e$. We further call an arrow $a\in A^k$ of the digraph $D_k$ \textit{incident} with the vertex $v$ if $h(a)=v$ or $t(a)=v$.} %
to $v$. That is, $\widehat{A}(v):=\{a\in A^k \ | \ v \text{ incident with } a; \ k\in C \}$ can be decomposed into $\widehat{A}(v) \ = \ \bigcupdot_{\Lambda \in \calL_v}{\Lambda}$ with $\Lambda\neq \emptyset$, where all trades of each set $\Lambda\in \calL_v$ have $v$ as one of the two counterparts in common. We call $\Lambda\in \calL_v$ a \textit{netting set} of the market participant $v$.\\

Obviously, a netting set is strongly dependent on the used netting opportunity as part of the market conventions. Within an OTC market, for instance, the \textsl{ISDA Master Netting Agreement}\footnote{See \href{http://www2.isda.org/}{http://www2.isda.org/}.} is a standard closeout netting agreement which allows two bilateral counterparts to net across different kinds of derivatives. Although netting across all product categories is often not allowed\footnote{See section 3.4.7 in \cite{Gregory2010}.}, we will in the following net across all classes of derivatives of $C$. In contrast, a CCP offers the possibility of netting across all its clearing members\footnote{See section 3.4.10 and 14.1 in \cite{Gregory2010}.}. Bilateral as well as multilateral netting will be treated in detail in sections \ref{sec:BilateralNetting} and \ref{sec:MultilateralNetting}.\\ 

Each netting set $\Lambda\in \calL_v$ corresponds with a set of r.v.s $\mathX_{\Lambda}:= \ \{{}^\pm|X_\lambda|\}_{\lambda \in \Lambda}$, where each element of a netting set represents the future value of a bilateral trade position with counterpart $v$. If we want to calculate the expected counterparty risk of a set of r.v. $\{{}^\pm|X_\lambda|\}_{\lambda \in \Lambda}$ of a market participant $v$, we need to specify whether an arrow $\lambda$ is a claim or a debt of $v$ relative to its counterpart. For this purpose we use the already introduced notation $X_{v, w}^{(k)}$. Further we write $\Lambda_v$ to stress that each trade position is meant relative to $v$, i.e., if the position is positive then $v$ claims the amount from $w$, and if it is negative then $v$ owes the amount to $w$. We designate $\mathX_{\calL_v} := \ \bigcup_{\Lambda \in \calL_v}{ \mathX_\Lambda}$ as the \textit{family of sets of r.v.s of} $v$ that implies the counterparty risk of the market participant $v$ for the entire market.

\begin{figure}[htb]
\centering
\tikzstyle{blueStyle}=[->, circle,draw=black!80,fill=gray!30,thick,inner sep=0pt,minimum size=4mm]
\tikzstyle{transparentStyle}=[->, circle,draw=white!100,fill=white!100]
\tikzstyle{pre}=[<-,shorten <=1pt,>=stealth',semithick]
\tikzstyle{post}=[->,shorten >=1pt,>=stealth',semithick]
\tikzstyle{every loop}=[min distance=15mm, looseness=1.5,out=30, in=-30]

\begin{tikzpicture}[scale=0.75]
\node[blueStyle] (v1) at (0,0) {$v_1$};
\node[blueStyle] (v2) at (3,0) {$v_2$};
\node[blueStyle] (v3) at (3,3) {$v_3$};
\node[blueStyle] (v4) at (0,3) {$v_4$};

\draw[->, black!100, very thick, dashed] (v1) to[] node[below, font=\footnotesize] {$a^{2}_2$} (v2);
\draw[<-, black!100, thick] (v1) to node[right, font=\footnotesize] {$a^{1}_1$} (v3);
\draw[<-, black!100, very thick, dashed] (v2) to[bend right=45] node[right, font=\footnotesize] {$a^{2}_1$} (v3);
\draw[->, black!100, thick] (v2) to[] node[right, font=\footnotesize] {$a^{1}_4$} (v3);
\draw[->, black!100, thick] (v3) to node[above, font=\footnotesize] {$a^{1}_2$} (v4);
\draw[->, black!100, very thick, dashed] (v4) to[bend right=45] node[left, font=\footnotesize] {$a^{2}_3$} (v1);
\draw[->, black!100, thick] (v4) to[] node[left, font=\footnotesize] {$a^{1}_3$} (v1);
\end{tikzpicture}
\caption{Digraph $D=(V:=\{v_1, v_2, v_3, v_4\}, \widehat{A})$ with $\widehat{A}=\{a_1^{1}, a_2^{1}, a_3^{1}, a_4^{1}, a_1^{2}, a_2^{2}, a_3^{2}\}$.}
\label{abb:GraphAndPartition}
\end{figure}
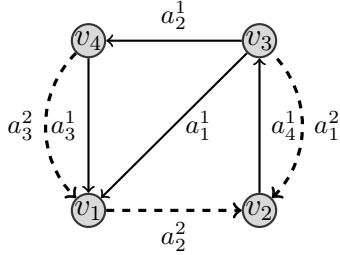

For instance, an evident partition of the set $\widehat{A}(v_1)$, as graphically suggested in Fig. \ref{abb:GraphAndPartition} by the different types of arrows, is $\widehat{A}(v_1)= \{a^{1}_1, a^{1}_3\} \cupdot \{a^{2}_2, a^2_3\}$. The netting set $\Lambda_{v_1} = \{a^{2}_2, a^{2}_3\}$ of Fig. \ref{abb:GraphAndPartition}, for example, corresponds with the set $\mathX_{\Lambda_{v_1}}=\{|X^{(2)}_{v_4, v_1}|, -|X^{(2)}_{v_1,v_2}|\}$ of random variables. The partition $\calL_{v_1} = \{ \ \{a^{1}_1, a^{1}_3\}, \{a^{2}_2, a^2_3\} \ \}$ contains all netting sets of the counterpart $v_1$ and the family of sets of r.v.s $\mathX_{\calL_{v_1}}$ of $v_1$ reflects the uncertainty of the future value of the positions between $v_1$ and its counterparts.\\ 

The objective of this very general notation of netting is to be as flexible as possible so that the model can cope with arbitrary netting types. In addition, we do not need to handle huge adjacency matrices, and we can apply the introduced notation for the next definition. We say that a digraph $D$ \textit{is distributed by} $P$, denoted by $\mathX \sim {}^\pm|P|$, if and only if all trade positions of $\mathX_{\calL_v}$ for all $v\in V$ are distributed independently and identically (i.i.d.) by ${}^\pm|P|$. We adapt a similar notation $\mathX \sim P$ for undirected graphs $G$.\\

After knowing the relevant netting sets and defining how to interpret the r.v.s, netting is then simply performed by adding the estimated future position values in form of the random variables. By taking the maximum between the netted sum and zero we determine the credit exposure for the counterparty $v$ and netting set $\Lambda_v$. Hence, the counterparty risk of $v$ considering the netting set $\Lambda_v$ is determined by
\begin{align}
     \label{form:MaxOfSum}
   \max\left[\sum_{\lambda \in \Lambda_v}{{}^\pm|X_\lambda|} ; \ 0 \right],
\end{align}
where ${}^\pm|X_\lambda|$ is the positive or negative absolute value of a real-valued symmetric r.v. $X_\lambda \sim P$ with zero mean.\ The theory of characteristic functions is a powerful tool for analysing sums of independent random variables. If the r.v.s of the finite sequence $(X_\lambda)_{\lambda\in \Lambda}$ are mutually independent then the c.f. of the sum $Y:=\sum_{\lambda \in \Lambda}{X_\lambda}$ is simply the product 
\begin{align}
	\label{form:SumRV}
	\phi_Y = \prod_{\lambda \in \Lambda}{ {}^\pm|\phi_{X_{\lambda}}| } 
\end{align}
of the corresponding characteristic functions. In general the function $\phi_Y$ is complex-valued, that is, $\phi_Y(t) = \eta(t)+ \ii \nu(t)$ with real part $\Re(\phi_Y)=\eta$ and imaginary part $\Im(\phi_Y)=\nu$. \ Apparently if we want to apply this concept to (\ref{form:MaxOfSum}), we have to find a way to figure out the c.f. of the positive as well as the negative absolute value of the given distribution $P$. We derive formulas for this purpose in section \ref{subsec:PositiveNegativePart}.\\


If a market participant $w$ claims money from the defaulted counterpart $v$, then $w$ will probably incur a loss. Whilst if $w$ owes money to the defaulted counterpart $v$, then $w$ will still have to honour the contractual payments. That is, in the latter situation $w$ cannot gain from the default by being somehow released from their liability. Thus, only a positive trade position implies an exposure greater than zero. The exposure can be figured out by using formula (4) in \cite{Pinelis2013} in order to obtain the c.f. 
\begin{align}
	\label{form:MaxOfRV}
	\phi_{\max[Y; 0]}(t) \ = \ \IE(e^{\ii t \max[Y; 0]}) \ = \ \frac{1}{2}[1+\phi_{Y}(t)] +  \frac{\ii}{2}[\Hilbert\{\phi_{Y}\}(t) - \Hilbert\{\phi_{Y}\}(0)]
\end{align}
of the r.v. $\max[Y; 0]$. Here, $\ii$ is the imaginary unit and $\Hilbert\{\phi_Y\}$ is the Hilbert transform\footnote{See section 3.1 in \cite{King2009}.} of the (characteristic) function $\phi_Y$ given by 
\begin{align}
	\label{form:defHT}
	\Hilbert \{\phi_Y(t)\}(\omega) 
     := \ \frac{1}{\pi} PV \int_{-\infty}^{\infty}{ \frac{\phi_Y(t) \,dt}{\omega-t}} 
     := \ \lim_{\epsilon \rightarrow 0} \frac{1}{\pi} \left( \int_{-\infty}^{\omega-\epsilon}{ \frac{\phi_Y(t) \,dt}{\omega-t}} + \int_{\omega + \epsilon}^{\infty}{ \frac{\phi_Y(t) \,dt}{\omega-t}} \right)  
\end{align}
with $t, \omega \in \IR$ and provided this integral exists. The $PV$ in front of the integral denotes the \textit{Cauchy principal value}\footnote{See section 2.4 in \cite{King2009}.} that expands the class of functions for which the ordinary improper integral exists. When it is clear from the context what is meant we will use the variable $t$ for the argument of the input function as well as for the argument of its Hilbert transform. According to Theorem 2.3.1 and its Corollary 2 in \cite{Lukacs1970} we can derive the expected value of any r.v. $Z$ by 
\begin{align}
	\label{form:CharFuncMoments}
	\IE(Z) \ = \ \ii^{-1} \partial_t[\phi_{Z}(t)](0),
\end{align}
on the condition that the first moment exists. In our case, we set $Z:=\max[Y; 0]$ in order to compute the desired expectation.

\subsection{An Illustrative Example}
\label{subsec:Example}

Several authors use simplified network structures such as complete, star, or random graphs. For instance, \textsc{Duffie} et al. \cite{Duffie2011} or \textsc{Cont} et al. \cite{Cont2011} assume complete graphs. Other authors such as \textsc{Nier} et al. \cite{Nier2007} assume that the edge set follows the Erdoes-Renyi model. The present model, however,  can deal with arbitrary graphs.
\begin{example}
Consider the financial market $\calM$ with four market participants in which interest rate and FX derivatives are traded. The market is depicted in Fig. \ref{abb:IllustrativeGraph} with  $G_k=(V, \ E^k)$ with $k\in C:=\{1, 2\}$ and $V:=\{v_1, v_2, v_3, v_4\}$. Here, the edges of $E^1=\{e^1_1, e^1_2, e^1_3, e^1_4\}$ and $E^2=\{e^2_1, e^2_2, e^2_3\}$ are represented by solid and dashed lines, respectively. Apparently neither of the two graphs $G_1$ and $G_2$ is complete or of star form, and both are different.
\begin{figure}[htb]
\centering
\tikzstyle{blueStyle}=[-, circle,draw=black!80,fill=gray!30,thick,inner sep=0pt,minimum size=4mm]
\tikzstyle{transparentStyle}=[-, circle,draw=white!100,fill=white!100]
\tikzstyle{pre}=[-,shorten <=1pt,>=stealth',semithick]
\tikzstyle{post}=[-,shorten >=1pt,>=stealth',semithick]
\tikzstyle{every loop}=[min distance=15mm, looseness=1.5,out=30, in=-30]
\begin{tikzpicture}[scale=0.9]
\node[blueStyle] (v1) at (0,0) {$v_1$};
\node[blueStyle] (v2) at (3,0) {$v_2$};
\node[blueStyle] (v3) at (3,3) {$v_3$};
\node[blueStyle] (v4) at (0,3) {$v_4$};

\draw[-, black!100, very thick, dashed] (v1) to[] node[below, font=\small] {$e^{2}_2$} (v2);
\draw[-, black!100, thick] (v1) to node[right, font=\small] {$e^{1}_1$} (v3);
\draw[-, black!100, very thick, dashed] (v2) to[bend right=45] node[right, font=\small] {$e^{2}_1$} (v3);
\draw[-, black!100, thick] (v2) to[] node[right, font=\small] {$e^{1}_4$} (v3);
\draw[-, black!100, thick] (v3) to node[above, font=\small] {$e^{1}_2$} (v4);
\draw[-, black!100, very thick, dashed] (v4) to[bend right=45] node[left, font=\small] {$e^{2}_3$} (v1);
\draw[-, black!100, thick] (v4) to[] node[left, font=\small] {$e^{1}_3$} (v1);
\end{tikzpicture}
\caption{Digraph $D=(V, \widehat{A})$ and underlying graph $G=(V, \widehat{E})$}
\label{abb:IllustrativeGraph}
\end{figure}
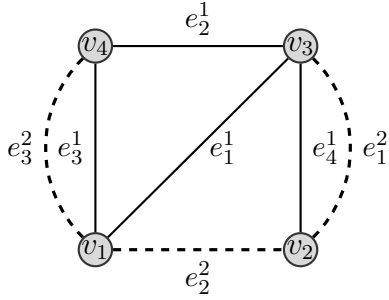
We further assume that $\calX \sim \calL(0, 1)$ where $\calL(\mu, b)$ denotes the \textit{Laplace distribution} with mean $\mu$ and scaling parameter $b$. The  partitions 
\begin{align*}
   \calL_{v_1} = \{ \ \{e_1^1, e_3^1\}, \{e_2^2\}, \{e_3^2\} \ \}     \qquad \calL_{v_2} &= \{ \ \{e_4^1\}, \{e_1^2\}, \{e_2^2\} \ \} \\
   \calL_{v_3} = \{ \ \{e_1^1, e_2^1, e_4^1\}, \{e_1^2\} \ \}         \qquad \calL_{v_4} &= \{ \ \{e_2^1, e_3^1\}, \{e_3^2\} \ \}
\end{align*}
and their elements, the netting sets, are determined by the netting type\footnote{In anticipation of section \ref{sec:Application}, multilateral and bilateral netting is applied to the edge sets $E^1$ and $E^2$, respectively.}. %
According to formulas (\ref{form:SumRV}) - (\ref{form:CharFuncMoments}) the expected counterparty risk of a netting set $\Lambda \in \calL_{v_i}$ with $i\in \{1,2,3,4\}$ can be calculated in four steps:
\begin{enumerate}
 \item[(a)] Determine the c.f. $\phi_Y$ of the netted position by using formula (\ref{form:SumRV});
 \item[(b)] Calculate the Hilbert transform of $\phi_Y$;
 \item[(c)] Determine the c.f. $\phi_{\max(Y;0)}$ of the credit exposure with formula (\ref{form:MaxOfRV});
 \item[(d)] Apply formula (\ref{form:CharFuncMoments}) to get the expected credit exposure.
\end{enumerate}
Afterwards the expected credit exposures of netting sets can be summed up because of their additivity.\\ 

We start with the c.f. $\phi_X(t)= \frac{1}{1+t^2}$ of a single r.v. $X \sim \calL(0,1)$ and we chose $\Lambda \in \calL_{v_i}$ with $|\Lambda|=1$. By applying formula (\ref{form:SumRV}) and considering that $X=Y$ we obtain $\phi_Y=\phi_X$ for all seven single-element netting sets. The calculation of the Hilbert transform $\Hilbert\{\phi_{Y}\}(t)= \frac{t}{1+t^2}$ is straightforward\footnote{Please refer to Example \ref{example:LaplaceMultiNetting}.} and the c.f. 
\begin{align*}
   \phi_{\max[Y; 0]}(t) \ = \ \frac{1}{2}\left[1+\phi_{Y}(t)\right] + \frac{\ii}{2}\left[\Hilbert\{\phi_{Y}\}(t) - \Hilbert\{\phi_{Y}\}(0)\right] = \frac{1}{2}\left[1+\frac{1}{1+t^2}\right] + \frac{\ii}{2}\left[\frac{t}{1+t^2}\right]
\end{align*}
of the credit exposure can be determined by formula (\ref{form:MaxOfRV}). Finally, we get $\IE(\max[Y; 0]) \ = \ \ii^{-1} \partial_t[\phi_{\max[Y; 0]}(t)](0) = \frac{1}{2}$ for the expected credit exposure of $\Lambda$ by applying formula (\ref{form:CharFuncMoments}).\\

For the remaining three netting sets the same steps (a)-(d) need to be performed, but the r.v. $Y$ and the c.f. $\phi_Y$ will be different due to the bigger netting set. The netting set $\Lambda'=\{e_1^1, e_2^1, e_4^1\}$, for instance, contains three edges which implies that $\phi_Y(t) = \phi_X^3(t) = \frac{1}{(1+t^2)^3}$. Thus, $\Hilbert\{\phi_Y\}(t)=\frac{15t}{8(1+t^2)^3}+\frac{5t^3}{4(1+t^2)^3}+\frac{3t^5}{8(1+t^2)^3}$ which leads to $\IE(\max[Y; 0]) \ = \ \frac{15}{16}$. Each of the two-element netting sets entails an additional $\frac{3}{4}$ of expected counterparty risk. Adding up, we receive $\frac{95}{16}=7 \times \frac{1}{2} + 2\times \frac{3}{4} + \frac{15}{16}$ as the expected counterparty risk of the entire market. 
\end{example}

The models of \cite{Duffie2011} or \cite{Cont2011} would have assumed that both graphs are complete, which means that each participant is connected via a trade position to all others. Another huge advantage of the present network model is that it can cope with a wide range of distributions. In the example above we have used the Laplace distribution but we could also have applied any other symmetric distribution with existing mean.\\

In situations where we have additional information about the possible direction of an exposure, it can be reasonable to model directed exposure. This means that only the position size is a matter of coincidence. Our network model can deal with arbitrary digraphs as well by taking the additional information in the c.f. into account. That is, for digraphs we need to calculate the c.f. of a single random variable, for instance, according to Proposition \ref{prop:AnalyticSignal}. Afterwards the remaining steps (b)-(d) are identical. In Example \ref{exa:BilTwoTierStructure} the expected counterparty risk of a directed simple two-tier market structure is calculated.\\

As long as we are capable of determining the related c.f.s and the corresponding Hilbert transforms we can use the outlined network model and the attached stochastic framework to determine the expected counterparty risk of an arbitrary graph or digraph. In section \ref{sec:AuxiliaryResults} we present auxiliary results, how some of the hurdles can be overcome in performing steps (a)-(d).

\section{Specific Network Structures and Counterparty Risk}
\label{sec:StructureTheorem}
To deduce a formula for the expected counterparty credit risk of an arbitrary network structure and for such a wide range of possible distributions, even without considering dependencies between the different positions, is a demanding task. Let us assume that the digraph $D=(V, \widehat{A})$ represents a network structure and that future positions are distributed by $\mathX \sim {}^\pm|P|$. The challenge is then to compute $\IE\left(\max\left[\sum_{\lambda \in \Lambda_v}{{}^\pm|X_\lambda|}, 0\right]\right)$ for an arbitrary netting set $\Lambda_v$ of a counterpart $v$ and to deal with various of problems attached to it: The negative as well as the positive absolute value is not distributed by $P$ anymore, because their sample space is restricted either to $]-\infty, 0[$ or to $]0, \infty[$. The probability distribution of the sum $\sum_{\lambda \in \Lambda}{{}^\pm|X_\lambda|}$ is actually the convolution of their distributions, and in general, little can be said about it. Finally, taking the maximum causes some sort of asymmetry of the problem, and this implies the non-additivity as the following example illustrates.

\begin{example}
Suppose $D=(V, A=\{a_1, a_2\})$ is the digraph as depicted in Fig. \ref{abb:P3} and represents a market with $K=1$ class of derivatives. 

\begin{figure}[htb]
\centering
\tikzstyle{blueStyle}=[->, circle,draw=black!80,fill=gray!30,thick,inner sep=0pt,minimum size=4mm]
\tikzstyle{pre}=[<-,shorten <=1pt,>=stealth',semithick]
\tikzstyle{post}=[->,shorten >=1pt,>=stealth',semithick]
\tikzstyle{every loop}=[min distance=15mm, looseness=1.5,out=30, in=-30]

\begin{tikzpicture}
\node[blueStyle] (u) at (0,0) {$u$};
\node[blueStyle] (v) at (1.5,0) {$v$};
\node[blueStyle] (w) at (3,0) {$w$};
\draw[->, black!100, thick] (u) to node[below, font=\small] {$a_1$} (v);
\draw[->, black!100, thick] (v) to node[below, font=\small] {$a_2$} (w);
\end{tikzpicture}
\caption{Path $D$}
\label{abb:P3}
\end{figure}
The associated r.v.s $X_{a_1}$ and $X_{a_2}$ are distributed i.i.d. by the \textit{continuous uniform distribution} $\calU(\{-1, 1\})$. The vertex $u$ has no expected counterparty risk at all, as $\IE(\max[-|X_{a_1}|; 0])=\IE(0)= 0$. The end-vertex $w$ of $D$ obviously entails $\IE(\max[|X_{a_2}|; 0])=\IE(|X_{a_2}|)= \frac{1}{2}$ expected counterparty credit risk. \ The c.f. of the sum $Y:=|X_{a_1}|-|X_{a_2}|$ equals $\phi_{\calU(0,1)}(t)\phi_{\calU(-1,0)}(t) = \frac{(1-e^{-\ii t})(-1+e^{\ii t})}{t^2}$ and its Hilbert transform $\Hilbert\{\phi_{Y}\}$ is $\frac{2(t-\sin(t))}{t^2}$. Taking the limit $\lim_{t \rightarrow 0} \Hilbert\{\phi_Y\}(t)$ and applying formulas (\ref{form:MaxOfRV}) to (\ref{form:CharFuncMoments}) we obtain $\IE\left(\max[Y; 0 ]\right) \ =  \frac{\partial_t\phi_{\max[Y; 0 ]}(0)}{\ii}=\frac{1}{6}$. Please note that $\IE\left(\max[|X_{a_1}|-|X_{a_2}|; 0 ]\right) \neq \IE\left(\max[|X_{a_1}|; 0 ]\right)+\IE\left(\max[-|X_{a_2}|; 0 ]\right)$. Economically, this inequality means that a breakdown of the credit exposure of one participant into smaller pieces is only possible by respecting the netting rules. Here, the multilateral netting rules are not respected by splitting up the sum $|X_{a_1}|-|X_{a_2}|$. 
\end{example}
In addition, the example demonstates that the more liabilities a counterpart $v$ has relative to its claims, the lower the counterparty credit risk will be for $v$. Considering the netting efficiency of a single counterparty the situation changes: the better the balance between claims and liabilities, the greater the offsetting effect of the netting opportunity.
\begin{figure}[htb]
\centering
\tikzstyle{blueStyle}=[->, circle,draw=black!80,fill=gray!30,thick,inner sep=0pt,minimum size=4mm]
\begin{tikzpicture}[scale=0.7]
\node[blueStyle] (u) at (0,0) {$u$};
\node[blueStyle] (v) at (2,0) {$v$};
\node[blueStyle] (w) at (0,2) {$w$};
\draw[->, black!100, thick] (u) to node[below, font=\small] {} (v);
\draw[->, black!100, thick] (v) to node[above, font=\small] {} (w);
\draw[->, black!100, thick] (w) to node[left, font=\small] {} (u);
\end{tikzpicture}
\caption{Exposure circle}
\label{abb:ExposureCircle}
\end{figure}
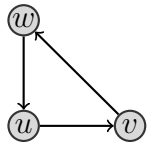
In anticipation of section \ref{sec:MultilateralNetting}, a popular example for the netting efficiency of a centrally cleared market is shown in Fig. \ref{abb:ExposureCircle}. Let us assume that each arrow of the exposure circle represents an exposure of \EUR{100} million. Then, the circle of exposure implies a perfect balance between the claims and liabilities of each participant, because the in- and outgoing arrows offset each other completely. If we generalise this obvious concept of exposure circles and use the language of graph theory we come across Eulerian digraphs. If we further replace the deterministic values by r.v.s that represent the future counterparty credit risk between two participants, then we come to Theorem \ref{theorem:CounterpartyRiskDigraph}. \ To be able to do so, however, we need to introduce the \textit{degree} $\gamma(v)$ of a vertex $v\in V$ within a graph $G_k=(V, E^k)$, which is the number $|E(v)|$ of different edges at $v$. Let us now consider a digraph $D_k=(V, A^k)$. The \textit{in-degree of} a single vertex $v$, denoted by $\gamma_+(v)$, is then the number of arrows $a\in A^k$ with $h(a)=v$. Similarly, we call the number of arrows $a\in A^k$ with $t(a)=v$ the \textit{out-degree of} $v$ and denote it by $\gamma_-(v)$. We shall call $\gamma_+: v \mapsto \gamma_+(v)$ the \textit{in-degree function}  and $\gamma_-: v \mapsto \gamma_-(v)$ the \textit{out-degree function}. Moreover, we define $\gamma(v):= \gamma_+(v) - \gamma_-(v)$ for a digraph $D_k$ and call $\gamma$ the \textit{Eulerian degree function} and $\gamma(v)$ the \textit{Eulerian degree of} $v$.
\begin{theorem}
\label{theorem:digraphs}
Let $D_k=(V, A^k)$ be a connected digraph with $\mathX \sim {}^\pm|P|$ and a netting set $\Lambda_v$ with $v\in V$. Then the following holds: 
\begin{enumerate}
	\item[(i)] $\IE(\sum_{\lambda \in \Lambda_v}{{}^\pm|X_\lambda|})=0$ if and only if $\gamma(v)=0$; 
	\item[(ii)] $\IE(\max[\sum_{\lambda \in \Lambda_v}{{}^\pm|X_\lambda|}; 0])= \frac{1}{2}\partial_t[\Hilbert \{\phi_Y\}](0)$ 
                    if $\gamma(v)=0$.
\end{enumerate}
\label{theorem:CounterpartyRiskDigraph}
\end{theorem}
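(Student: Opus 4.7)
The plan for part (i) is a direct application of linearity of expectation. By construction, each random variable $+|X_\lambda|$ attached to an arrow with head $v$ contributes $\mu := \IE(|X|)$ to the sum, while each $-|X_\lambda|$ attached to an arrow with tail $v$ contributes $-\mu$. Since $\mathX \sim {}^\pm|P|$ means all positions are i.i.d.\ in absolute value, $\mu$ is the same scalar for every $\lambda$. Interpreting the netting set $\Lambda_v$ under multilateral netting as collecting all arrows incident to $v$, the number of positive contributions equals $\gamma_+(v)$ and the number of negative contributions equals $\gamma_-(v)$, so
\begin{equation*}
\IE\!\left(\sum_{\lambda \in \Lambda_v}{}^{\pm}|X_\lambda|\right) \; = \; \bigl(\gamma_+(v) - \gamma_-(v)\bigr)\mu \; = \; \gamma(v)\,\mu.
\end{equation*}
Because $P$ is assumed to be a non-degenerate symmetric distribution with existing mean, $\mu > 0$, and hence the right-hand side vanishes if and only if $\gamma(v) = 0$.

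For part (ii) the idea is to differentiate the representation of $\phi_{\max[Y; 0]}$ coming from formula (\ref{form:MaxOfRV}) and then to invoke part (i). Writing $Y:=\sum_{\lambda\in \Lambda_v}{}^{\pm}|X_\lambda|$ and applying (\ref{form:CharFuncMoments}) to $Z:=\max[Y;0]$, I would compute
\begin{equation*}
\IE\bigl(\max[Y;0]\bigr) \; = \; \ii^{-1}\partial_t\bigl[\phi_{\max[Y;0]}(t)\bigr](0) \; = \; \tfrac{1}{2}\ii^{-1}\partial_t\phi_Y(0) \; + \; \tfrac{1}{2}\,\partial_t[\Hilbert\{\phi_Y\}](0),
\end{equation*}
noting that the constant pieces $\tfrac{1}{2}$ and $-\tfrac{\ii}{2}\Hilbert\{\phi_Y\}(0)$ in (\ref{form:MaxOfRV}) drop out after differentiation. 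By (\ref{form:CharFuncMoments}) applied to $Y$ itself, $\ii^{-1}\partial_t\phi_Y(0)=\IE(Y)$, which equals $0$ by part (i) under the hypothesis $\gamma(v)=0$. Substituting this in yields the claimed identity.

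The technical prerequisites to verify along the way are that $\phi_Y$ is differentiable at the origin (which follows from the existence of the first moment of $P$, implying existence of the first moment of each $\pm|X_\lambda|$ and hence of their finite sum) and that $\Hilbert\{\phi_Y\}$ is likewise differentiable at $0$. The latter is exactly the regularity needed to make formula (\ref{form:MaxOfRV}) usable for moment extraction, so in our setting it can be taken as part of the standing assumptions. The only genuinely delicate point, and the one I would handle most carefully, is the bookkeeping that links the counts of positive and negative absolute values inside $\Lambda_v$ to $\gamma_+(v)$ and $\gamma_-(v)$; everything else is linearity of expectation and a one-line differentiation.
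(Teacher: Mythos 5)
Your proposal is correct and follows essentially the same route as the paper: part (i) by linearity of expectation with the count of positive/negative absolute values identified with $\gamma_+(v)$ and $\gamma_-(v)$, and part (ii) by differentiating formula (\ref{form:MaxOfRV}) at the origin and killing the $\IE(Y)$ term via part (i). The only (immaterial) difference is that the paper additionally argues that $\phi_{Y_v}$ is real-valued and even when $\gamma(v)=0$, so that $\Hilbert\{\phi_{Y_v}\}$ is odd and $\Hilbert\{\phi_{Y_v}\}(0)=0$, whereas you simply observe that this constant drops out under differentiation anyway.
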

\begin{proof}
See section \ref{subsec:ProofDigraphTheorem}.%
\end{proof}
A graph is called \textit{Eulerian} if each vertex of that graph has an even degree. A digraph is called \textit{Eulerian} if the in-degree equals the out-degree for each vertex $v$ of that digraph, i.e., if $\gamma(v)=0$ for each vertex $v\in V$.\\

We have shown in the last theorem that a vertex $v\in V$ of a digraph with $\gamma(v)=0$ is distinguished in the context of counterparty credit risk. Because of the definition of an Eulerian digraph $D_k=(V, A^k)$ the equation $\gamma(v)=0$ is valid for every vertex $v\in V$. That is, netting efficiency goes hand in hand with Eulerian digraphs in the context of so-called multilateral netting rules\footnote{Please refer to section \ref{sec:MultilateralNetting} and especially to Example \ref{example:LaplaceMultiNetting}.}. For graphs we can state a similar result.
\begin{theorem}
\label{theorem:graphs}
Let $G_k=(V, E^k)$ be a connected graph with $\mathX \sim P$ and a netting set $\Lambda_v$ with $v\in V$. Then the following holds: 
\begin{align}
    \label{form:Graph}
    \IE\left(\max\left[\sum_{\lambda \in \Lambda_v}{X_\lambda}; 0\right]\right) = \frac{1}{2} \partial_t[\Hilbert\{\phi_Y(t)\}](0).
\end{align}
\label{theorem:CounterpartyRiskGraph}
\end{theorem}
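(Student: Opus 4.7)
The proof should follow directly from the machinery already assembled in section \ref{subsec:NettingExposure}, with the crucial simplification that in the undirected-graph setting the netted random variable has mean zero. Specifically, set $Y := \sum_{\lambda \in \Lambda_v} X_\lambda$. Since $\mathX \sim P$ with $P$ symmetric around the origin and of zero mean, each summand is symmetric with $\IE(X_\lambda) = 0$, and by independence the convolution $Y$ inherits both properties, in particular $\IE(Y) = 0$. This is what replaces the Eulerian hypothesis $\gamma(v) = 0$ needed in Theorem \ref{theorem:digraphs}(ii): for directed positions, the positive/negative absolute values ${}^\pm|X_\lambda|$ have nonzero mean individually, so zero-meanness of $Y$ has to be forced by a balance condition on $v$, whereas here it is automatic from the symmetry of $P$.

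Granted $\IE(Y) = 0$, the computation is short. First I would invoke formula (\ref{form:MaxOfRV}) to write
$$\phi_{\max[Y;0]}(t) = \tfrac{1}{2}\bigl[1+\phi_Y(t)\bigr] + \tfrac{\ii}{2}\bigl[\Hilbert\{\phi_Y\}(t) - \Hilbert\{\phi_Y\}(0)\bigr],$$
then differentiate in $t$ and evaluate at $t = 0$. The additive constants $1$ and $\Hilbert\{\phi_Y\}(0)$ are killed by $\partial_t$, leaving
$$\partial_t[\phi_{\max[Y;0]}](0) = \tfrac{1}{2}\partial_t[\phi_Y](0) + \tfrac{\ii}{2}\partial_t[\Hilbert\{\phi_Y\}](0).$$
By (\ref{form:CharFuncMoments}) applied to $Y$, the first term equals $\tfrac{\ii}{2}\IE(Y) = 0$. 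A second application of (\ref{form:CharFuncMoments}), this time to $\max[Y;0]$, multiplies the surviving term by $\ii^{-1}$ and produces exactly $\tfrac{1}{2}\partial_t[\Hilbert\{\phi_Y\}](0)$, which is the claimed identity.

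The only points that require attention are the standing integrability and regularity hypotheses underlying (\ref{form:MaxOfRV}) and (\ref{form:CharFuncMoments}): existence of the first moments of $Y$ and $\max[Y;0]$, and differentiability of $t \mapsto \Hilbert\{\phi_Y\}(t)$ at the origin. Integrability is immediate, since each $X_\lambda$ has a defined mean by assumption, $\Lambda_v$ is finite, and $0 \leq \max[Y;0] \leq |Y|$. Differentiability of $\Hilbert\{\phi_Y\}$ at $0$ is part of the hypotheses imported from \cite{Pinelis2013} that legitimise formula (\ref{form:MaxOfRV}) in the first place, so nothing new has to be verified here. The real work of the theorem is therefore conceptual rather than computational: isolating the zero-mean property as the single feature of the undirected graph model that makes the $\tfrac{1}{2}[1+\phi_Y]$ contribution disappear and distils the expected exposure into a pure Hilbert-transform derivative at the origin.
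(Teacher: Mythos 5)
Your proposal is correct and follows essentially the same route as the paper's own proof: apply formula (\ref{form:MaxOfRV}), differentiate at the origin, and use that $\IE(Y)=0$ (by the symmetry and zero mean of $P$) to kill the $\tfrac{1}{2}\partial_t[\phi_Y](0)$ term, leaving $\tfrac{1}{2}\partial_t[\Hilbert\{\phi_Y\}](0)$. The only cosmetic difference is that the paper first argues $\Hilbert\{\phi_Y\}(0)=0$ via the evenness of $\phi_Y$ and the parity property of the Hilbert transform, whereas you simply note that this additive constant is annihilated by $\partial_t$ regardless.
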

\begin{proof}
See section \ref{subsec:ProofGraphTheorem}.%
\end{proof}
In contrast to digraphs, formula (\ref{form:Graph}) of Theorem \ref{theorem:CounterpartyRiskGraph} is valid for any vertex of an arbitrary graph.\  
The reason for this mismatch is that the symmetry of the net r.v. $Y=\sum_{\lambda \in \Lambda_v}{X_\lambda}$ does not depend on the netting set $\Lambda_v$. In the case of a graph, the r.v.s $X_\lambda$ with $\lambda \in \Lambda_v$ are symmetric and so is $Y$.\\

Thus, using either a graph or a digraph to model a financial market does matter and should be well-considered.

\section{Application of the Network Model}
\label{sec:Application}
In this section we define and explain measures for counterparty credit risk within an OTC and a centrally cleared market. Afterwards, we are going to derive how to compute the expected counterparty credit risk for both types of markets for a typical day in the future by applying the model introduced in section \ref{sec:Model}. For that purpose we clarify, in a first step, the outline of both netting types. In a second step, we apply the notation  $X_{v, w}^{(k)}$ for a r.v. and $x_{v, w}^{(k)} \in \IR$ for its realisation as introduced in section \ref{subsec:MarketSettings}.\\

To this end, we need to define the terms adjacent and neighbourhood. Two different vertices $v$ and $w$ of a graph $G_k=(V, E^k)$ are \textit{adjacent} if $\{v, w\}$ is an edge of $G_k$. In the case of a digraph $D_k=(V, A^k)$, the two vertices are \textit{adjacent} if either $(v,w)\in A^k$ or $(w,v)\in A^k$ is valid. The \textit{neighbourhood} $U\subseteq V$ of a vertex $v$ in a graph $G_k$ or a digraph $D_k$ is the set of all vertices adjacent to $v$.

\subsection{Counterparty Risk within an OTC Market}
\label{sec:BilateralNetting}
Within an OTC market $\calM$ we are allowed to offset positions across all kinds of derivatives classes, but only between one single pair of counterparties. Thus, a bilateral netting set of a given market participant $v\in V$ will correspond to a \textit{bilateral portfolio} of $v$ and one of its counterparts $w\in U_{v}^{C}$. Here, $U_{v}^{C}$ is the neighbourhood of $v$ across \textit{all} classes of derivatives $k\in C$.\\

The real number $x_{v, w}^{(k)}$ shall now represent the current observable position value of $v$ relative to $w\in V\setminus \{v\}$ within derivatives class $k$. The deterministic function value $y_{v,w}(C) := \ \sum_{k\in C}{\ x^{(k)}_{v, w}}$ is called the \textit{current bilateral position of} $v$ \textit{to} $w$. Obviously, the function $y_{v, w}$ can be positive or negative, and an immediate consequence of the definition is the validity of $y_{v,w}(C) = -y_{w, v}(C)$, meaning that the claims of the one are the liabilities of the other counterparty. Thus, the actual \textit{current bilateral counterparty risk of} $v$ \textit{to} $w$ is $\max[y_{v, w}(C); 0]$ and we call $z_\mathb(D):= \sum_{v\in V}{\sum_{w\in U_{v}^{C}}{\max[y_{v, w}(C); 0]}}$ the \textit{current bilateral counterparty risk of} $D=(V, \widehat{A})$. This is a reasonable bilateral counterparty risk measure of an arbitrary market $\calM$ because it adds up the netted exposure of all bilateral portfolios of $\calM$.\\

We now assume that the trade position that corresponds with an arrow of the digraph $D=(V, \widehat{A})$ is not yet realised but represented abstractly by a r.v., i.e., $\mathX \sim {}^\pm|P|$. For all other relations between two different counterparts, we set the (future) position value deterministically to zero. The direction of each arrow in each class $k\in C$ determines whether the positive or the negative absolute value of the associated r.v. is relevant for the calculation.\\ 

We shall call $\IE\left(\max[Y_{v, w}(C);0]\right)$ the \textit{expected bilateral counterparty risk of} $v$ \textit{to} $w$, where $Y_{v, w}(C)=\sum_{k\in C}{{}^\pm|X_{v, w}^{(k)}|}$ is now a conditional r.v. with mean and variance depending on the information about the directions of the positions. In section \ref{subsec:NettingExposure} we have demonstrated how to compute the expectation of such a random variable. Let $\phi_{Y_{v, w}}$ and $\phi_{\max[Y_{v, w};0]}$ be the c.f.s of the r.v. $Y_{v, w}(C)$ and $\max[Y_{v, w}(C);0]$, respectively. If we apply formulas (\ref{form:SumRV}) to (\ref{form:CharFuncMoments}) we obtain
\begin{align}
     \label{form:ExptBilCPR}
	 \IE\left(\max[Y_{v, w}(C);0]\right) = \frac{\partial_t[\phi_{\max[Y_{v, w};0]}(t)](0)}{\ii}.
\end{align}
Putting the parts together and considering the possible asymmetry of the r.v.s of a digraph, as described in section \ref{sec:StructureTheorem}, we finally obtain the formula
\begin{align}
	\label{form:ExptBilCPRMarket}
	\IE(Z_\mathb(D)) = \sum_{ v\in V}{ \sum_{w\in U_{v}^{C}}{\IE\left(\max[Y_{v, w}(C);0] \right)} } 
                    = \sum_{ v\in V}{ \sum_{w\in U_{v}^{C}}{\frac{\partial_t[\phi_{\max[Y_{v, w};0]}(t)](0)}{\ii} } }
\end{align}
for the \textit{expected bilateral counterparty risk of} an arbitrary digraph $D$.
\begin{example}
\label{exa:BilTwoTierStructure}
Financial markets can be organised in different layers. In Germany, for instance, \textsc{Upper} and \textsc{Worms} \cite{Upper2004} describe a two-tier structure of the German interbank market. The directed two-tier structure shown in Fig. \ref{abb:CounterExampleAdvantageousness} with $\mathX \sim {}^\pm|\calL(0,1)|$ is a digraph with $N=6$ market participants and two classes of derivatives $C=\{1,2\}$, where the different classes are depicted by different looking arrows.
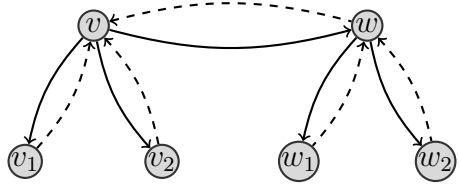
\begin{figure}[htb]
\centering
\tikzstyle{blueStyle}=[->, circle,draw=black!80,fill=gray!30,thick,inner sep=0pt,minimum size=4mm]
\tikzstyle{pre}=[<-,shorten <=1pt,>=stealth',semithick]
\tikzstyle{post}=[->,shorten >=1pt,>=stealth',semithick]
\tikzstyle{every loop}=[min distance=15mm, looseness=1.5,out=30, in=-30]
\begin{tikzpicture}[scale=0.9]
\node[blueStyle] (v) at (-2,0) {$v$};
\node[blueStyle] (v_1) at (-3,-2) {$v_1$};
\node[blueStyle] (v_2) at (-1,-2) {$v_2$};
\node[blueStyle] (w) at (2,0) {$w$};
\node[blueStyle] (w_1) at (1,-2) {$w_1$};
\node[blueStyle] (w_2) at (3,-2) {$w_2$};
\draw[->, black!100, thick] (v) to[bend left=-15] node[above, font=\small] {} (w);
\draw[->, black!100, thick] (v) to[bend left=-15] node[below, font=\small] {} (v_1);
\draw[->, black!100, thick] (v) to[bend left=-15] node[above, font=\small] {} (v_2);
\draw[->, black!100, thick] (w) to[bend left=-15] node[below, font=\small] {} (w_1);
\draw[->, black!100, thick] (w) to[bend left=-15] node[above, font=\small] {} (w_2);
\draw[<-, black!100, thick, dashed] (v) to[bend right=-15] node[above, font=\small] {} (w);
\draw[<-, black!100, thick, dashed] (v) to[bend right=-15] node[below, font=\small] {} (v_1);
\draw[<-, black!100, thick, dashed] (v) to[bend right=-15] node[above, font=\small] {} (v_2);
\draw[<-, black!100, thick, dashed] (w) to[bend right=-15] node[below, font=\small] {} (w_1);
\draw[<-, black!100, thick, dashed] (w) to[bend right=-15] node[above, font=\small] {} (w_2);
\end{tikzpicture}
\caption{Digraph with a directed two-tier structure and two classes of derivatives.}
\label{abb:CounterExampleAdvantageousness}
\end{figure}
Calculating the expected bilateral counterparty risk for the digraph $D=(V, \widehat{A})$ with $\widehat{A}= A^1 \cupdot A^2$ as depicted in Fig. \ref{abb:CounterExampleAdvantageousness} is straightforward, because all bilateral portfolios have the same simple structure. There is only one claim and one debt within each of the ten bilateral portfolios. For instance, $Y_{v, w}(C) = |X_{v, w}^{(1)}| - |X_{v, w}^{(2)}|$, whereby the r.v.s on the right hand side of the last equation are distributed by the positve and negative absolute value of the Laplace distribution. The associated c.f. $\phi_Y$ of $Y_{v, w}(C)$ equals $\frac{1}{1+t^2}=\frac{\ii}{\ii+t}\frac{-\ii}{-\ii+t}$ and applying formulas (\ref{form:MaxOfRV}) to (\ref{form:CharFuncMoments}) we get $\phi_{\max[Y; 0]}(t) \ = \ \frac{1}{2}[1+\frac{1}{1+t^2}] +  \frac{\ii}{2}[\Hilbert\{\frac{1}{1+t^2}\}(t) - \Hilbert\{\frac{1}{1+t^2}\}(0)]= \frac{2\ii +t}{2\ii +2t}$, where $\Hilbert\{\frac{1}{1+t^2}\}(t) = \frac{t}{1+t^2}$.%
\footnote{Please note that the positive absolute value of the Laplace distribution is the exponential distribution. The imaginary part of the c.f. of the exponential distribution equals the Hilbert transform of $\frac{1}{1+t^2}$ because of equation (\ref{form:PosPartAnalyticSignal}). An alternative way to derive the Hilbert transform of $\frac{1}{1+t^2}$ is to use formula (\ref{form:ResidueFormula}) or (\ref{form:SimplyfiedResidueFormula}). In our case we get $\Hilbert\{ \frac{1}{1+t^2} \}(\omega) = 2\ii \Res\left[\frac{1}{(1+t^2)(\omega-t)},\ii\right] + \ii \Res\left[\frac{1}{(1+t^2)(\omega-t)}, \omega\right] =  \frac{\ii}{-\omega^2-1} + \frac{1}{\omega-\ii} = \frac{\omega}{1+\omega^2}$ and by turning $\omega$ to $t$ we get the same result. Please refer to Example \ref{example:LaplaceMultiNetting}. } %
Thus, $\IE(\max[Y_{v, w}(C);0])= \frac{ \partial_t[\frac{2\ii +t}{2\ii +2t}](0)}{\ii}=\frac{1}{2}$ is the expected counterparty risk for $v$ relative to $w$. Because of formula (\ref{form:ExptBilCPRMarket}) and the similarity of the bilateral portfolios, we obtain $\IE(Z_\mathb(D)) = \frac{10}{2}=5$.
\end{example}
Obviously, if we leave out the information about the direction of the r.v.s and if we further consider the symmetry of the r.v.s of a graph we will receive a similar formula for the underlying graph $G_D$ of $D$. The next example will demonstrate this obvious result. 
\begin{example}
\label{exa:BilCPR:ComplGraph}
Consider the graph $G:=(\{v, w\}, \widehat{E})$ with $N=2$, $\mathX \sim \calN(0, \sigma)$ and $\widehat{E}= E^1 \cupdot \ldots \cupdot E^K$ as sketched in Fig \ref{abb:BilatCounterpartyRisk}, where the different looking edges represent the netted positions between $v$ and $w$ within one of $K$ derivatives classes.\\

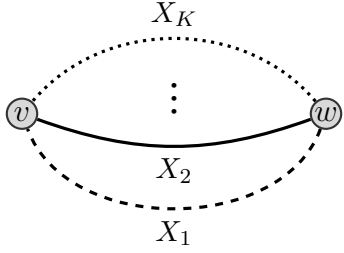
\begin{figure}[htb]
\centering
\tikzstyle{blueStyle}=[->, circle,draw=black!80,fill=gray!30,thick,inner sep=0pt,minimum size=4mm]
\tikzstyle{transparentStyle}=[->, circle,draw=white!100,fill=white!100]
\tikzstyle{pre}=[<-,shorten <=1pt,>=stealth',semithick]
\tikzstyle{post}=[->,shorten >=1pt,>=stealth',semithick]
\tikzstyle{every loop}=[min distance=15mm, looseness=1.5,out=30, in=-30]

\begin{tikzpicture}
\node[blueStyle] (v1) at (0,0) {$v$};
\node[blueStyle] (v2) at (4,0) {$w$};
\node[transparentStyle] (d1) at (2,0.2) {\Large$\vdots$};
\draw[-, black!100, very thick, dashed] (v2) to[bend right=-70] node[below, font=\small] {$X_1$} (v1);
\draw[-, black!100, very thick] (v1) to[bend left=-20] node[below, font=\small] {$X_2$} (v2);
\draw[-, black!100, very thick, dotted] (v2) to[bend left=-50] node[above, font=\small] {$X_K$} (v1);
\end{tikzpicture}
\caption{Two-vertex graph $G$ with $K$ edges}
\label{abb:BilatCounterpartyRisk}
\end{figure}

We have assumed $X_{v, w}^{(k)}\sim \calN(0,\sigma)$ for all $k\in C$ and we do not have any additional information about the direction of the edges. The c.f. $\phi_{Y_{v, w}(C)}(t)=\phi_{Y_{w,v}(C)}(t)=e^{-\frac{t^2 K \sigma^2}{2}}$ of the sum $Y:=Y_{v, w}(C)=\sum_{k\in C}{X_{v, w}^{(k)}}$ equals the \textit{even} and real-valued c.f. of $\calN(0,\sqrt{K}\sigma)$. Because of equation (\ref{form:MaxOfRV}) and $\Hilbert\{ \phi_{Y} \}(0)=0$ we obtain
\begin{align*}
	\phi_{\max[Y;0]}(t) 	
                    &=  \frac{1}{2}[1+\phi_{Y}(t)] +  \frac{\ii}{2}[\Hilbert\{ \phi_{Y} \}(t)-\Hilbert\{ \phi_{Y} \}(0)] \\
				&= \frac{1}{2}+\frac{e^{\frac{-t^2 K\sigma^2}{2}}}{2}+ \ii \frac{\Dawson\left( \frac{t \sqrt{K}\sigma}{\sqrt{2}}\right)}{\sqrt{\pi}}
\end{align*}
where $\Hilbert\{ \phi_{Y} \}(t) = \frac{2}{\sqrt{\pi}}\Dawson\left( \frac{t \sqrt{K}\sigma}{\sqrt{2}}\right)$ and $\Dawson(t):= e^{-t^2} \int_{0}^{t}{e^{s^2} \, ds}$ is the so-called \textit{Dawson function}\footnote{See 7.2.5 in \cite{Olver2013}.}. The expected bilateral counterparty risk of $v$ to $w$ is then given by
\begin{align*}
	\IE(\max[Y;0]) \ = \  \frac{\partial_t[\phi_{\max[Y;0]}](0)}{\ii} \ = \  \frac{\partial_t\left[\ii \frac{\Dawson\left( \frac{t \sqrt{K}\sigma}{\sqrt{2}}\right)}{\sqrt{\pi}}\right](0)}{\ii} \ = \ \sigma\sqrt{\frac{K}{2\pi}}.
\end{align*}
Please also refer to formula (\ref{form:Graph}). The expectation above can be generalised to a complete graph with $N>2$ market participants. Each market participant $v$ would have $(N-1)$ counterparties in each derivatives class $k\in C$. The expected counterparty risk of one market participant would then be $(N-1)\sigma\sqrt{\frac{K}{2\pi}}$, which matches formula (3) in \cite{Duffie2011}. Please note the symmetry of the bilateral portfolios and that this expectation comprises $2^{\frac{N(N-1)}{2}}$ orientations for each of the $K$ derivatives classes. 
\end{example}
\begin{concl} 
Let $D=(V, \widehat{A})$ be a connected digraph along with $N\geq 2$ market participants, $K>1$ derivatives classes $C=\{1, \ldots, K\}$ and $\mathX \sim {}^\pm|P|$ that represents an OTC market. Then the following holds:
\begin{enumerate}
	\item[(i)]  $\IE[\sum_{k\in C}{{}^\pm|X_{v, w}^{(k)}|}] = \IE[Y_{v, w}(C)]=0$ if and only if the subdigraph induced by $\{v, w\}$ is Eulerian;
	\item[(ii)] $\IE\left(\max[Y_{v, w}(C);0]\right) = \frac{1}{2} \partial_t [\Hilbert \{\phi_{Y_{v, k}}(t)](0)$ if the 
                    subdigraph induced by $\{v, w\}$ is Eulerian.
\end{enumerate}
\label{concl:EulerianBilatCCR}
\end{concl}
The proof of this conclusion is obvious since the netting set of a bilateral market equals all arrows between $v$ and $w$. Then, we only have to apply Theorem \ref{theorem:CounterpartyRiskDigraph}.

\subsection{Counterparty Risk within a Centrally Cleared Market}
\label{sec:MultilateralNetting}
By introducing a CCP we do not need to extend the preceding setting, because the network contains enough information to calculate the counterparty credit risk of each single entity and therefore of the entire network. That is, we imagine a CCP as an abstract entity which is not part of the market participant's network. We assume that a CCP clears exactly one class of derivatives. Within a multilateral derivatives market $\calM$ we are not allowed to offset positions across different derivatives classes. But we can aggregate positions of all market participants of one single class $k\in C$ that is cleared by a CCP instead. Thus, a multilateral netting set of a market participant $v\in V$ equals the neighbourhood $U_{v}^{(k)}$ of $v$ within derivatives class $k\in C$.\\

We shall call $y_{v,k}(U_{v}^{(k)}) :=  \sum_{w\in U_{v}^{(k)}}{\ x_{v, w}^{(k)}} \in \IR$ the \textit{current multilateral position of} $v$ within derivatives class $k$. The function $y_{v,k}$ therefore represents the netted sum of the observable trade positions of $v$ relative to all adjacent 
vertices within class $k$, but the actual \textit{current multilateral counterparty risk of} $v$ to the CCP is $\max[y_{v, k}(U_{v}^{(k)}); 0]$.\\ 

If we change the perspective to the overall counterparty risk of the entire market, we get $z_{\mathm}(D_k) := \sum_{v\in V}{\left|y_{v, k}(U_{v}^{(k)})\right|}$ the \textit{current multilateral counterparty risk of} $D_k$. The equation $\sum_{v\in V}{y_{v, k}(U_{v}^{(k)})} = 0$ is valid since each position is counted \textsl{twice} in $z_\mathm(D_k)$, once as debt and once as claim. Therefore, we receive
\begin{align*}
	z_\mathm(D_k) = \sum_{v\in V}{\left|y_{v, k}(U_{v}^{(k)})\right|} =	2\cdot\sum_{v\in V}{\max[y_{v, k}(U_{v}^{(k)}); 0]}.
\end{align*} 
The last equation justifies the novation of the original bilateral contract to a CCP. We further assume that bilateral netting is the prevailing form of netting, unless explicitly stated otherwise. Therefore, if we introduce a CCP in one of the $K$ classes of derivatives, let us say in the class $k\in C$, then we assume that the remaining $(K-1)$ classes are still cleared bilaterally. Thus the total exposure of the entire market is 
\begin{align}
     \label{form:MultilExpEntireDigraph}
		z_\mathm(D) := \ z_\mathm(D_k) + z_\mathb(D \setminus D_k),
\end{align}
where the last summand of the right-hand side of the equation denotes the current bilateral counterparty risk of the OTC market $D \setminus D_k :=(V, \widehat{A} \setminus A^k)$. \\

Let us now suppose that the position values of $\calM$ that corresponds to arrows of the digraph $D = (V,A)$ are not yet realised but represented abstractly by $\mathX \sim {}^\pm|P|$. For all other relations between two different counterparts, we set the (future) position value deterministically to zero. The direction of each arrow of $A^k$ determines the condition $\pm$ of the r.v.s ${}^\pm|X_{v, w}^{(k)}|$ distributed by ${}^\pm|P|$. Further, we denote the c.f. of the r.v. $Y_{v, k}(U_{v}^{(k)})$ and $\max[Y_{v, k}(U_{v}^{(k)});0]$ by $\phi_{Y_v}$ and $\phi_{\max[Y_v;0]}$, respectively. If we apply again formulas (\ref{form:SumRV}) to (\ref{form:CharFuncMoments}) the \textit{expected multilateral counterparty risk of} of an arbitrary market participant $v$ of $D_k$ is then determined by 
\begin{align}
     \label{form:MultCCR_Pair}
     \IE\left(\max[Y_{v, k}(U_{v}^{(k)});0]\right) \ = \ \frac{\partial_t[\phi_{\max[Y_v;0]}(t)](0)}{\ii}, 
\end{align}
where $Y_{v,k}(U_{v}^{(k)})=\sum_{w\in U_{v}^{(k)}}{{}^\pm|X_{v, w}^{(k)}|}$ is a r.v. with mean and variance depending on the information about the directions of the positions. Putting the parts together, we obtain the formula
\begin{align}
	\label{form:MultCCR_Market}
	\IE(Z_\mathm(D_k)) = \sum_{v\in V}{ \IE\left(\max[Y_{v, k}(U_{v}^{(k)});0]\right) } = \sum_{v\in V}{ \frac{\partial_t[\phi_{\max[Y_v;0]}(t)](0)}{\ii} }
\end{align} 
for the \textit{expected multilateral counterparty risk of} an arbitrary digraph $D_k$ (of derivatives class $k$). We adopt formula (\ref{form:MultilExpEntireDigraph}) to get 
\begin{align}
     \label{form:CCP:OverallExposure:Digraph}
     Z_\mathm(D) := \ Z_\mathm(D_k) + Z_\mathb(D \setminus D_k)
\end{align}
for digraphs, where the weights of the arrows depend on random variables.\\

Again, if we fade the information about the direction of each r.v. out, we will obtain similar formulas like (\ref{form:MultCCR_Pair}) and (\ref{form:MultCCR_Market}) for an undirected underlying graph $G_k$ of $D_k$. As in Example \ref{exa:BilCPR:ComplGraph}, we derive a formula for a complete undirected graph with $\mathX \sim \calN(0,\sigma)$.
\begin{example}
Suppose $G=(V, E)$ is the undirected complete graph with $N$ vertices, one class of derivatives and $\mathX \sim \calN(0,\sigma)$. We do not have any information about the direction of the positions. For the sake of simplicity we write $Y_{v}$ instead of $Y_{v,1}=\sum_{w\in U_{v}^{(k)}}{X_{v, w}^{(1)}}$ and $X_{v, w}$ instead of $X_{v, w}^{(1)}$. The c.f. of the sum $Y_{v}(U_{v}^{(k)})=\sum_{w\in U_{v}^{(k)}}{X_{v, w}}$ equals the \textit{even} function $\phi_{Y_{v}}(t) = e^{ \frac{-t^2}{2} (N-1) \sigma^2}$ and its Hilbert transform equals the \textit{odd} function $\Hilbert \{ \phi_{Y_{v}}\}(t) = \frac{2}{\sqrt{\pi}}\Dawson\left(\frac{t\sqrt{N-1}\sigma}{\sqrt{2}}\right)$. Applying formulas (\ref{form:Graph}) and (\ref{form:MultCCR_Pair}), we obtain
\begin{align*}
	\IE\left(\max[Y_{v}(U_{v}^{(k)});0]\right) \ = \ \frac{1}{\ii}\partial_t\left[\frac{1}{2}\left(1+e^{\frac{-t^2}{2}(N-1)\sigma^2}\right)+ \ii\frac{\Dawson\left(\frac{t\sqrt{N-1}\sigma}{\sqrt{2}}\right)}{\sqrt{\pi}}\right](0) = \sqrt{\frac{N-1}{2\pi}}\sigma
\end{align*}
the expected multilateral counterparty risk of $v$ within a complete graph with $N$ vertices. The last equation equals formula (4) in \cite{Duffie2011}. Be aware of the symmetry of the multilateral portfolios and that $G$ is the underlying graph of $2^{\frac{N(N-1)}{2}}$ different digraphs and the calculated expected counterparty risk is some sort of average of all of them.
\end{example}
Applying Theorem \ref{theorem:CounterpartyRiskDigraph} to a centrally cleared market we get the following conclusion.  
\begin{concl} 
Let $D=(V, \widehat{A})$ be a connected digraph along with $N\geq 2$ market participants, $K$ derivatives classes $C=\{1, \ldots, K\}$ and $\mathX \sim {}^\pm|P|$ that represents a centrally cleared market. Then the following holds:
\begin{enumerate}
	\item[(i)]  $\IE[\sum_{w\in U_{v}^{(k)}}{{}^\pm|X_{v, w}^{(k)}|}] =\IE[Y_{v, k}(U_{v}^{k})]=0$ if and only if $\gamma(v)=0$ within class $k$;
	\item[(ii)] $\IE\left(\max[Y_{v, k}(U_{v}^{(k)});0]\right) = \frac{1}{2} \partial_t [\Hilbert \{\phi_{Y_{v, k}}(t)](0)$ if $\gamma(v)=0$ 
                    within class $k$.
\end{enumerate}
\label{concl:EulerianMultiCCR}
\end{concl}
\begin{example}
\label{example:LaplaceMultiNetting}
Suppose $G=(V, E=\{e_1, e_2, e_3\})$ is the undirected graph as depicted in Fig. \ref{abb:C3} with a single class of derivatives and $\mathX \sim \calL(0, 1)$. Let $l, m, n \in \{1,2,3\}$ with $m \neq n$.\\

\begin{figure}[htb]
\centering
\tikzstyle{blueStyle}=[->, circle,draw=black!80,fill=gray!30,thick,inner sep=0pt,minimum size=4mm]
\tikzstyle{pre}=[<-,shorten <=1pt,>=stealth',semithick]
\tikzstyle{post}=[->,shorten >=1pt,>=stealth',semithick]
\tikzstyle{every loop}=[min distance=15mm, looseness=1.5,out=30, in=-30]
\begin{tikzpicture}[scale=0.7]
\node[blueStyle] (v_1) at (0,0) {$v_1$};
\node[blueStyle] (v_2) at (2,0) {$v_2$};
\node[blueStyle] (v_3) at (0,2) {$v_3$};
\draw[-, black!100, thick] (v_1) to node[below, font=\small] {$e_1$} (v_2);
\draw[-, black!100, thick] (v_2) to node[above, font=\small] {$e_2$} (v_3);
\draw[-, black!100, thick] (v_3) to node[left, font=\small] {$e_3$} (v_1);
\end{tikzpicture}
\hspace{2cm}
\begin{tikzpicture}[scale=0.7]
\node[blueStyle] (v_1) at (0,0) {$v_1$};
\node[blueStyle] (v_2) at (2,0) {$v_2$};
\node[blueStyle] (v_3) at (0,2) {$v_3$};
\draw[->, black!100, thick] (v_1) to node[below, font=\small] {$a_1$} (v_2);
\draw[->, black!100, thick] (v_2) to node[above, font=\small] {$a_2$} (v_3);
\draw[->, black!100, thick] (v_3) to node[left, font=\small] {$a_3$} (v_1);
\end{tikzpicture}
\caption{Graph $G$ and an Eulerian orientation $D$}
\label{abb:C3}
\end{figure}
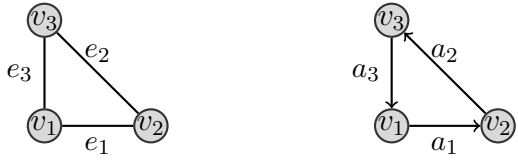
We consider the Eulerian orientation $D=(V, \{a_1, a_2, a_3\})$ as sketched in Fig. \ref{abb:C3}, which is one out of eight possible orientations of $G$ as depicted in Fig. \ref{abb:Orientations}. The orientations define the directions of the edges and therefore the direction of the corresponding trade positions. 
\begin{figure}[htb]
\centering
\tikzstyle{blueStyle}=[->, circle,draw=black!80,fill=gray!30,thick,inner sep=0pt,minimum size=4mm]
\begin{tikzpicture}[scale=0.65]
\node[blueStyle, font=\small] (v1) at (0,0) {$v_1$};
\node[blueStyle, font=\small] (v_2) at (2,0) {$v_2$};
\node[blueStyle, font=\small] (v_3) at (0,2) {$v_3$};
\draw[->, black!100, thick] (v_2) to node{} (v_1);
\draw[->, black!100, thick] (v_3) to node{}  (v_2);
\draw[->, black!100, thick] (v_1) to node{}  (v_3);
\end{tikzpicture}
\hspace{1cm}
\begin{tikzpicture}[scale=0.65]
\node[blueStyle, font=\small] (v_1) at (0,0) {$v_1$};
\node[blueStyle, font=\small] (v_2) at (2,0) {$v_2$};
\node[blueStyle, font=\small] (v_3) at (0,2) {$v_3$};
\draw[->, black!100, thick] (v_1) to node{} (v_2);
\draw[->, black!100, thick] (v_3) to node{}  (v_2);
\draw[->, black!100, thick] (v_1) to node{}  (v_3);
\end{tikzpicture}
\hspace{1cm}
\begin{tikzpicture}[scale=0.65]
\node[blueStyle, font=\small] (v_1) at (0,0) {$v_1$};
\node[blueStyle, font=\small] (v_2) at (2,0) {$v_2$};
\node[blueStyle, font=\small] (v_3) at (0,2) {$v_3$};
\draw[->, black!100, thick] (v_2) to node{}  (v_1);
\draw[->, black!100, thick] (v_2) to node{}  (v_3);
\draw[->, black!100, thick] (v_1) to node{}  (v_3);
\end{tikzpicture}
\hspace{1cm}
\begin{tikzpicture}[scale=0.65]
\node[blueStyle, font=\small] (v_1) at (0,0) {$v_1$};
\node[blueStyle, font=\small] (v_2) at (2,0) {$v_2$};
\node[blueStyle, font=\small] (v_3) at (0,2) {$v_3$};
\draw[->, black!100, thick] (v_2) to node{} (v_1);
\draw[->, black!100, thick] (v_3) to node{}  (v_2);
\draw[->, black!100, thick] (v_3) to node{}  (v_1);
\end{tikzpicture}
\newline
\newline
\begin{tikzpicture}[scale=0.65]
\node[blueStyle, font=\small] (v_1) at (0,0) {$v_1$};
\node[blueStyle, font=\small] (v_2) at (2,0) {$v_2$};
\node[blueStyle, font=\small] (v_3) at (0,2) {$v_3$};
\draw[->, black!100, thick] (v_1) to node{} (v_2);
\draw[->, black!100, thick] (v_2) to node{}  (v_3);
\draw[->, black!100, thick] (v_3) to node{}  (v_1);
\end{tikzpicture}
\hspace{1cm}
\begin{tikzpicture}[scale=0.65]
\node[blueStyle, font=\small] (v_1) at (0,0) {$v_1$};
\node[blueStyle, font=\small] (v_2) at (2,0) {$v_2$};
\node[blueStyle, font=\small] (v_3) at (0,2) {$v_3$};
\draw[->, black!100, thick] (v_2) to node{} (v_1);
\draw[->, black!100, thick] (v_2) to node{}  (v_3);
\draw[->, black!100, thick] (v_3) to node{}  (v_1);
\end{tikzpicture}
\hspace{1cm}
\begin{tikzpicture}[scale=0.65]
\node[blueStyle, font=\small] (v_1) at (0,0) {$v_1$};
\node[blueStyle, font=\small] (v_2) at (2,0) {$v_2$};
\node[blueStyle, font=\small] (v_3) at (0,2) {$v_3$};
\draw[->, black!100, thick] (v_1) to node{}  (v_2);
\draw[->, black!100, thick] (v_3) to node{}  (v_2);
\draw[->, black!100, thick] (v_3) to node{}  (v_1);
\end{tikzpicture}
\hspace{1cm}
\begin{tikzpicture}[scale=0.65]
\node[blueStyle, font=\small] (v_1) at (0,0) {$v_1$};
\node[blueStyle, font=\small] (v_2) at (2,0) {$v_2$};
\node[blueStyle, font=\small] (v_3) at (0,2) {$v_3$};
\draw[->, black!100, thick] (v_1) to node{} (v_2);
\draw[->, black!100, thick] (v_2) to node{}  (v_3);
\draw[->, black!100, thick] (v_1) to node{}  (v_3);
\end{tikzpicture}
\hspace{1.3cm}
\caption{All $8=2^3$ possible orientations of $G$}
\label{abb:Orientations}
\end{figure}
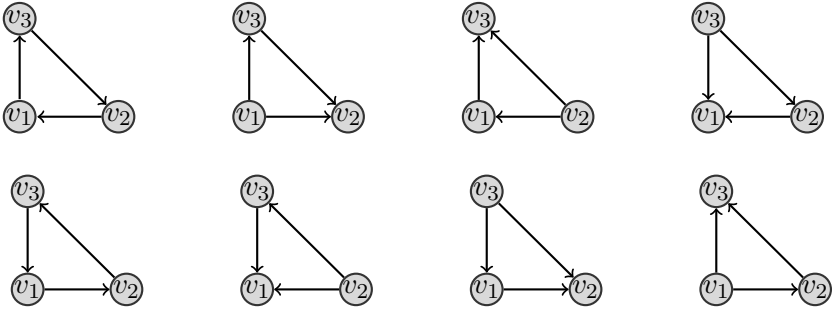
The c.f. $\frac{1}{1+t^2}=\frac{\ii}{\ii +t}\frac{-\ii}{-\ii +t}$ of the netted sum $Y_{v_l}:= |X_m|-|X_n|$ is real-valued and even. Here, $X_m$ and $X_n$ are r.v.s that represent the corresponding positions. Applying (ii) of Conclusion \ref{concl:EulerianMultiCCR} we obtain
\begin{align*}
	\IE\left(\max[Y_{v_l}(U_{v_l}^{(1)}); 0 ]\right) \ = \ \frac{1}{2}\partial_t [\Hilbert \{\phi_{Y_{v_l}}\}(t)](0) 
     \ = \ \frac{1}{2}\partial_t \left[\frac{t}{1+t^2}\right](0) = \frac{1}{2}
\end{align*}
the expected multilateral counterparty risk of $v_l$ with $l \in \{1,2,3\}$ of the Eulerian orientation $D$. The expectation of the centrally cleared market is therefore given by $\IE(Z_\mathm(D))=\IE(Z_\mathm(D_1)) = \sum_{l=1}^{3}{ \IE\left(\max[Y_{v_l}(U_{v_l}^{(1)}); 0 ]\right) } = 3 \cdot \frac{1}{2}= \frac{3}{2}$. All other six non-Eulerian orientations entail $\frac{5}{2}$ as expected counterparty risk of the entire market, which is significantly more than the risk of the two Eulerian orientations.\\

Let us now turn our attention back to the undirected graph $G=(V, E)$. The c.f. of a position equals the real-valued and even function $\frac{1}{1+t^2}$. The positive absolute value of $\frac{1}{1+t^2}$ is an analytic signal, because $\frac{1}{1+t^2}$ as well as its Fourier transform is an absolute integrable function.\footnote{Please refer to section \ref{subsec:PositiveNegativePart}.} Thus, the imaginary part of $\frac{1}{1 + t^2} + \ii \frac{t}{1+t^2}$ is the Hilbert transform of $\frac{1}{1+t^2}$. \\

The c.f. $\phi_{Y_{v_l}}(t) = \frac{1}{(1+t^2)^2} = \frac{1}{1+t^2}\frac{1}{1+t^2}$ of the r.v. $Y_{v_l} := X_m + X_n$ of participant $v_l$ is also  even and real-valued. Its Hilbert transform $\Hilbert\{\phi_{Y_{v_l}}\}(t)$ for all $l\in \{1,2,3\}$ is given by $\frac{t(3+t^2)}{2(1+t^2)^2}$. Applying formula (\ref{form:Graph}) we obtain
\begin{align*}
	\IE\left(\max[Y_{v_l, 1}(U_{v_l}^{(1)}); 0 ]\right) \ = \ \frac{1}{2}\partial_t\left[\frac{t(3+t^2)}{2(1+t^2)^2}\right](0) \ = \  \frac{3}{4}
\end{align*}
for one of the three vertices of $G$, so $\IE(Z_\mathm(G)) = \frac{9}{4}$. This result can also be deduced as an average of all orientations, that is, $\frac{1}{8}(2\cdot \frac{3}{2} + 6 \cdot \frac{5}{2}) = \frac{9}{4}$.    
\end{example}
The last example has shown that information about the direction of positions is essential for the computation of the expectation of an entire market. Disregarding such information could lead to over- or underestimating counterparty credit risk. It is obvious that similar result can be obtained in more complex networks and that the precise structure is essential for the study of counterparty and therefore systemic risk.

\subsection{Advantageousness of Multilateral Netting}
\label{subsec:Advantageousness}
%
The objective of this section is to generalise a main result of \textsc{Duffie} and \textsc{Zhu} \cite{Duffie2011}. The authors raised the question of whether it is more advantageous for the overall counterparty credit risk to clear via a CCP or classically bilateral between the two involved counterparties. By applying the introduced network model we can answer this question not only for complete graphs, but also for arbitrary graphs and digraphs. Moreover, the network model introduced in section \ref{sec:Model} is not constrained to the normal distribution, it can also employ any (symmetric) distribution with a defined expected value.\\

Introducing a CCP for a single class of derivatives $k\in C$ within a digraph $D$ with $K\in \IN$ classes of derivatives along with $\mathX \sim {}^\pm|P|$ improves the netting efficiency if and only if
\begin{align*}
	\IE(Z_\mathm(D)) \ = \ \IE(Z_\mathm(D_k)) + \IE(Z_\mathb(D \setminus D_k)) \ < \ \IE(Z_\mathb(D))
\end{align*}
due to definition (\ref{form:CCP:OverallExposure:Digraph}). Because of formulas (\ref{form:ExptBilCPRMarket}) and (\ref{form:MultCCR_Market}) this applies if and only if
\begin{align}
     \label{form:digraph:AdvantageousnessOfCCP}
	\sum_{v\in V}{\frac{\partial_t[{}_{N-1}\phi_{v, k}(t)](0)}{\ii} } + \sum_{v\in V}{\sum_{w\in U_{v}^{C}}{ \frac{\partial_t[{}_{K-1}\phi_{v, w}(t)](0)}{\ii} }}  \ < \  
	\sum_{v\in V}{\sum_{w\in U_{v}^{C}}{ \frac{\partial_t[{}_{K}\phi_{v, w}(t)](0)}{\ii} }},
\end{align}
where 
\begin{align*}
	{}_{M}\phi_{v, \cdot}(t) &:= \frac{1}{2}\left[1 + \phi_X(t)^M \right] + \frac{\ii}{2}\left[ \Hilbert \{\phi_X(t)^M\}(t) - \Hilbert\{\phi_X(t)^M\}(0) \right]
\end{align*}
is the c.f. of the r.v. $\max[Y, 0]$ with $Y := \sum_{j=1}^{M}{{}^\pm|X_j|}$ and $X_j \sim P$. In this section $M\in \IN$ is a representative for $N-1$, $K$ or $K-1$ and we write $X$ instead of $X_j$. We are now able to compute the advantageousness for a specific digraph by applying the introduced stochastic framework to the inequality (\ref{form:digraph:AdvantageousnessOfCCP}).\\ 

Let us now focus on a single \textit{representative counterpart} $v$ of a \textit{complete} undirected \textit{graph} with $\mathX \sim P$ as in \cite{Duffie2011}. Because of the features of a complete graph as well as the equations (\ref{form:ExptBilCPR}) and (\ref{form:MultCCR_Pair}) it is then profitable for $v$ to be cleared via a CCP if and only if
\begin{align}
		\label{form:InequAdvCCP}
	  \frac{\partial_t[{}_{N-1}\phi_{v, k}(t)](0)}{\ii} + (N-1) \frac{\partial_t[{}_{K-1}\phi_{v, w}(t)](0)}{\ii} \ &< \ (N-1)\frac{\partial_t[{}_{K}\phi_{v, w}(t)](0)}{\ii}\\
	  \Leftrightarrow	 (N-1)(\partial_t[{}_{K}\phi_{v, w}(t)](0) - \partial_t[{}_{K-1}\phi_{v, w}(t)](0))   \ &> \   \partial_t[{}_{N-1}\phi_{v,K}(t)](0) .
\end{align}

If we apply formula (\ref{form:Graph}) and then put the result into (\ref{form:InequAdvCCP}) we will receive the inequality
\begin{align}
	\label{form:UndirectedInequ}
     \frac{1}{2} \partial_t[\Hilbert\{\phi_X(t)^{N-1}\}](0)  \ < \  \frac{(N-1)}{2} \left[\partial_t[\Hilbert\{\phi_X(t)^{K}\}](0) - \partial_t[\Hilbert\{\phi_X(t)^{K-1}\}](0) \right]
\end{align}
for a complete graph where the positions are distributed by $P$.

\begin{example}
\label{exa:Advant:Normal}
Suppose we have a complete graph $G=(V, E)$ with $N$ market participants, a single class of derivatives and along with $\mathX \sim \calN(0,1)$. Applying formula (\ref{form:UndirectedInequ}) and considering $\partial_t \Hilbert\{\phi_X(t)^M\}(0) = \sqrt{M}\sqrt{\frac{2}{\pi}}$ we obtain
\begin{align*}
  \frac{1}{2}\sqrt{N-1}\sqrt{\frac{2}{\pi}} \ &< \  \frac{(N-1)}{2} \left( \sqrt{K}\sqrt{\frac{2}{\pi}} - \sqrt{K-1}\sqrt{\frac{2}{\pi}} \right). 
\end{align*}
This inequality can be easily transformed to $K< \frac{N^2}{4(N-1)}$ with $N>2$, which equals formula (6) in \cite{Duffie2011}.
\end{example}

The Laplace distribution has plainly fatter tails than the normal distribution. However, the results of the next example are very similar compared to these of Example \ref{exa:Advant:Normal}. 

\begin{example}
\label{exa:AdvantageousnessCCP}
Again, we consider a complete graph $G=(V, E)$ along with $\mathX \sim \calL(0,1)$. The c.f. $\phi_X(t)$ equals $\frac{1}{1+t^2}$. Unfortunately, in this case we can not solve the inequality induced by formula (\ref{form:UndirectedInequ}) exactly. Instead, we apply formula (\ref{form:Graph}) in order to obtain
\begin{align}
	\label{form:ExpCRiskLaplace}
	 \frac{1}{\ii} \partial_t [ {}_{M}\phi_{v, \cdot}(t) ](0) = \frac{1}{2} \partial_t [ \Hilbert\{\phi_{X}(t)^M \} ](0) = \frac{\Gamma(\frac{1}{2}+M)}{\sqrt{\pi}\Gamma(M)} = \frac{M}{2^{2M}} \binom{2M}{M}
\end{align}
the expected multilateral counterparty risk for the representative counterpart $v$ within derivatives class $k$. Here, $\Gamma(t) := \int_0^\infty{x^{t-1} e^{-x} \,dx}$ is the so-called \textit{gamma function}, which is an extension of the factorial function.\footnote{The result of Example \ref{example:LaplaceMultiNetting} can now be recalculated with formula (\ref{form:ExpCRiskLaplace}). Just set $M=N-1=2$ and we get $\frac{1}{\ii} \partial_t [ {}_{2}\phi_{v, k}(t)] = \frac{\frac{3\sqrt{\pi}}{4}}{\sqrt{\pi}} = \frac{3}{4}$.} %
\begin{table}[htb]
\centering
\begin{tabular}{c|cccccccccc} 
\hline
$K=$	     & 1 & 2 & 3  & 4  & 5  & 6  & 7  & 8  & 9  & 10 \\
\hline
$N\geq$   & 2 & 6 & 10 & 14 & 18 & 22 & 26 & 30 & 34 & 38 \\
\hline
\end{tabular}
	\caption{How many market participants do we need to assure the advantageousness of the central clearing?}
		\label{tab:Laplace}
\end{table}
By applying formula (\ref{form:InequAdvCCP}) and (\ref{form:ExpCRiskLaplace}) we can deduce the values of Tab. \ref{tab:Laplace}, where the solution set of $N$ for a given $K$ is listed. For instance, if we have $K=5$ classes of derivatives, we will need at least $N=18$ market participants in order to ensure the advantageousness of the central clearing under the presuppositions made.
\end{example}
We have confirmed formula (6) of \cite{Duffie2011}, and we have also shown how to apply the introduced model to do a similar calculation for a different distribution within a complete graph. However, if we want to study the exact constitution of a financial market we need to study an arbitrary graph or digraph and not a specific type of graphs or digraphs. As a result something as a single representative counterpart for the entire structure can in general not exist. However, we can compare the implications made from the different models in order to validate the associate assumptions.
\begin{example}
Let $G_D$ be the underlying graph $G_D$ of Example \ref{exa:BilTwoTierStructure}. That is, we consider an undirected two-tier market structure with $\mathX \sim \calL(0,1)$. On the one hand, the expected bilateral counterparty risk $\IE(Z_\mathb(G_D))=7.5$ is significantly smaller than the expected multilateral counterparty risk $\IE(Z_\mathm(G_D)) = 8.875$. Compared, on the other hand, with the implications drawn in Example \ref{exa:AdvantageousnessCCP}, Tab. \ref{tab:Laplace} for a complete graph with $K=2$ classes of derivatives, we obviously see that the risk profile of the described two-tier market with $N=6$ participants is completely different. 
\end{example}
The last example demonstrates that the exact market structure is essential for giving a comprehensive answer to the raised question.

\section{Auxiliary Results for the Application}
\label{sec:AuxiliaryResults}
In this section we provide auxiliary results that can help to clear hurdles related to the application of the network model presented in section \ref{sec:Model}. To the extent of our knowledge, Proposition \ref{prop:Hilbert} about Hilbert transforms is not yet known.\footnote{For a situation where poles inside the contour do show up please refer to section 22.10 in \cite{King2009a}.} The essence of Proposition \ref{prop:AnalyticSignal} is known from \textit{signal processing}, but it has not yet been applied in the context of counterparty or systemic risk.

\subsection{Taking the Maximum and the Hilbert Transform}
\label{subsec:HilbertTransform}
In order to tackle the problem of deriving the Hilbert transform we generalise the result in section 3.4 in \cite{King2009}.
We show how to derive two very useful formulas by using complex analysis and the well-known residue theorem.\\

In order to take the maximum $\max[Y; 0]$ between the sum $Y=\sum_{\lambda \in \Lambda_v}{{}^\pm|X_\lambda|}$ and zero, we have to deal with the Hilbert transform as introduced in (\ref{form:defHT}). Let $\phi_Y$ be a real-valued function, and let the function $\IC \ni z \mapsto \frac{\phi_Y(z)}{\omega -z}$ be extended into the complex plane and bounded by $C$. This extended function is required to be analytic within the complex upper half-plane, except for a finite number of poles $a_1, \ldots, a_m\in \IC$ of order $n_1, \ldots, n_m \in \IN$. Further we assume that $\phi_Y(z) \rightarrow 0$ as $|z| \rightarrow \infty$.  

\begin{prop}
\label{prop:Hilbert}
If the previously stated prerequisites are met then the equations
\begin{align}
     \label{form:ResidueFormula}
     \Hilbert\{\phi_Y(t)\}(\omega) \ = \ 2\ii \sum_{j=1}^{m}{\Res\left[\frac{\phi_Y(z)}{\omega-z}, a_j\right]} 
                                   + \ii \Res\left[\frac{\phi_Y(z)}{\omega-z}, \omega\right].
\end{align}
and
\begin{align}
     \label{form:SimplyfiedResidueFormula}
     \Hilbert\{\phi_Y(t)\}(\omega) \ = \ 2\ii \sum_{j=1}^{m}{\left(\frac{1}{(n_j-1)!} \lim_{z \rightarrow a_j} \frac{\partial^{n_j-1}}{\partial z^{n_j-1}}[(z-a_j)^{n_j} \frac{\phi_Y(z)}{\omega - z}]\right)} - \ii \lim_{z \rightarrow \omega}  \phi_Y(z)
\end{align}
are valid.
\end{prop}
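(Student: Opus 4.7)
The plan is to apply the residue theorem to the meromorphic function $f(z):=\phi_Y(z)/(\omega-z)$ on a classical indented-semicircle contour in the upper half-plane. Fixing $R$ large and $\epsilon>0$ small, I would take $\Gamma$ to consist of the real segments $[-R,\omega-\epsilon]$ and $[\omega+\epsilon,R]$, a small semicircle $\gamma_\epsilon$ of radius $\epsilon$ into the upper half-plane around the simple pole at $z=\omega$, and a closing large semicircle $\gamma_R$ of radius $R$ in the upper half-plane. For $R$ sufficiently large and $\epsilon$ sufficiently small, the only poles of $f$ enclosed by $\Gamma$ are exactly $a_1,\dots,a_m$ by hypothesis, and the residue theorem yields $\oint_\Gamma f(z)\,dz = 2\pi\ii\sum_{j=1}^{m}\Res[f,a_j]$.

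The core of the argument is then to pass to the limits $R\to\infty$ and $\epsilon\to 0$ on each of the four pieces of $\Gamma$ separately. The two real segments combine, by the very definition~(\ref{form:defHT}), into $\pi\,\Hilbert\{\phi_Y\}(\omega)$. The large arc $\gamma_R$ contributes $0$ in the limit: its integrand is bounded in modulus by $|\phi_Y(z)|/(R-|\omega|)$, so the arc-length estimate yields a bound of order $\sup_{|z|=R}|\phi_Y(z)|$, which vanishes under the standing assumption $\phi_Y(z)\to 0$ as $|z|\to\infty$. For the small indentation I would parametrise $z=\omega+\epsilon e^{\ii\theta}$ with $\theta$ running from $\pi$ down to $0$, matching the counterclockwise orientation of $\Gamma$; a short calculation with $dz=\ii\epsilon e^{\ii\theta}\,d\theta$ collapses the integrand and, upon letting $\epsilon\to 0$, gives
\begin{equation*}
\int_{\gamma_\epsilon}\frac{\phi_Y(z)}{\omega-z}\,dz \;\longrightarrow\; \ii\pi\,\phi_Y(\omega) \;=\; -\ii\pi\,\Res[f,\omega],
\end{equation*}
where the last equality uses $\Res[f,\omega]=-\phi_Y(\omega)$, which follows from the simple-pole contribution of the factor $1/(\omega-z)$.

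Assembling the three nonzero limiting contributions in the residue identity and dividing through by $\pi$ rearranges at once into formula~(\ref{form:ResidueFormula}). Formula~(\ref{form:SimplyfiedResidueFormula}) is then immediate: for each interior pole $a_j$ of order $n_j$, substitute the classical expression
\begin{equation*}
\Res[f,a_j] \;=\; \frac{1}{(n_j-1)!}\lim_{z\to a_j}\frac{\partial^{n_j-1}}{\partial z^{n_j-1}}\!\left[(z-a_j)^{n_j}\,\frac{\phi_Y(z)}{\omega-z}\right],
\end{equation*}
and replace the term $\ii\,\Res[f,\omega]$ by $-\ii\lim_{z\to\omega}\phi_Y(z)$ using the simple-pole computation above; the sign flip produced by writing $\omega-z=-(z-\omega)$ is precisely what converts the plus sign in~(\ref{form:ResidueFormula}) into the minus sign in~(\ref{form:SimplyfiedResidueFormula}).

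The main obstacle I anticipate is the sign bookkeeping on the small arc: the indentation contributes only a \emph{half}-residue, and the sign depends on whether $\gamma_\epsilon$ passes above or below $\omega$, so misreading the orientation would flip the sign of the $\Res[f,\omega]$ term in both formulas. A secondary point of care is the large-arc vanishing: if $\phi_Y$ decays only slowly a Jordan-lemma-type refinement might be required, but the hypothesis $\phi_Y(z)\to 0$ combined with the additional $1/|z|$ gained from the factor $(\omega-z)^{-1}$ is exactly what makes the straightforward length-times-bound estimate go through in the present setup.
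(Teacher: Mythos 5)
Your proposal is correct and follows essentially the same route as the paper: the same indented upper-half-plane semicircular contour, the residue theorem, the vanishing of the large arc via the decay hypothesis, and the half-residue contribution $\ii\pi\phi_Y(\omega)=-\ii\pi\Res[f,\omega]$ from the small indentation, followed by substituting the standard higher-order residue formula and $\Res[f,\omega]=-\phi_Y(\omega)$ to pass from (\ref{form:ResidueFormula}) to (\ref{form:SimplyfiedResidueFormula}). Your orientation bookkeeping on $\gamma_\epsilon$ (parametrising $\theta$ from $\pi$ down to $0$) is in fact cleaner than the paper's stated parametrisation, and it lands on the same signs.
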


\begin{proof}
The interval $]-R, R[$, as part of the domain of the Hilbert transform (\ref{form:defHT}), is incorporated into the closed path $C= \ C_R \ \cupdot \ ]-R,\omega-\epsilon[ \ \cupdot \ C_\epsilon \ \cupdot \ ]\omega+\epsilon,R[$ as sketched in Fig. \ref{Fig:ContourIntegral}. Obviously, we have $R, \epsilon \in ]0, \infty[$. The positively oriented contour $C$ consists of the semicircle $C_R$, the semicircle $C_\epsilon$ and the segments on the real line.\\
\begin{figure}
  \centering
\tikzstyle{grayStyle}=[rectangle,draw=black!100,fill=gray!80,inner sep=0pt,minimum size=1mm]
\begin{tikzpicture}[scale=3.0]
\draw [<->, shorten <=1pt,>=stealth',semithick] (-1.2,0) -- (1.2,0);
\draw [<->, shorten <=1pt,>=stealth',semithick] (0,-0.1) -- (0,1.1);
\draw [->, thick] (1,0) arc (0:90:1cm);
\draw [->, thick] (0,1) arc (90:180:1cm);
\draw [->, thick] (-1,0) -- (0.4, 0);
\draw [<-, thick] (0.6,0.0) arc (0:180:0.1cm);
\draw [->, thick] (0.6,0) -- (1,0);
\draw (1.1, 0.0) node[above] {\footnotesize $\IR$};	
\draw (0.0,1.1) node[right] {\footnotesize $\ii \IR$};	
\node[grayStyle] (t0) at (0.5,0.0) {};
\node [below] at (t0) {$t=\omega$};
\node[grayStyle] (a3) at (-0.2,0.3) {};
\node [above] at (a3) {$a_3$};
\node[grayStyle] (a2) at (-0.4,0.5) {};
\node [above] at (a2) {$a_2$};
\node[grayStyle] (a1) at (-0.8,0.25) {};
\node [above] at (a1) {$a_1$};
\node[grayStyle] (a5) at (0.5,0.45) {};
\node [above] at (a5) {$a_{m}$};
\node[grayStyle] (a6) at (0.3,0.6) {};
\node [above] at (a6) {$a_{m-1}$};
\node [below] at (0.1,0.5) {$\ldots$};
\node [above] at (0.8, 0.7) {\bfseries $C_R$};
\node [above] at (0.5, 0.1) {\bfseries $C_\epsilon$};
\node [below] at (-1,0) {\bfseries $-R$};
\node [below] at (1,0) {\bfseries $R$};
\end{tikzpicture}
  	\caption{Contour $C$ and the poles of $\phi_Y(t)$ and $\frac{1}{\omega-t}$}
  \label{Fig:ContourIntegral}  	
\end{figure}
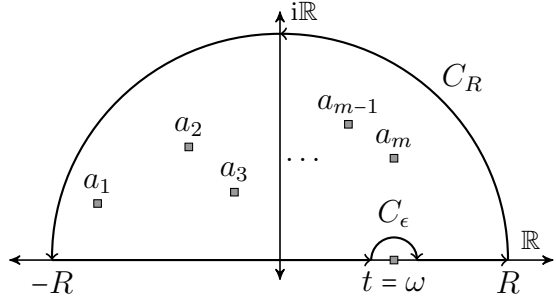

The real-valued integrand is then extended into the complex region bounded by $C$ and $\frac{\phi_Y(z)}{\omega -z}$ is required to be analytic\footnote{See Chapter 1, Holomorphic Functions in \cite{Remmert1991}.} within the complex upper half-plane, except for a finite number of poles $a_1, \ldots, a_m\in \IC$ of order $n_1, \ldots, n_m \in \IN$. Furthermore, we assume that $\phi_Y(z) \rightarrow 0$ as $z \rightarrow \infty$ and we obviously have a simple pole on the real line at $t=\omega$. We can choose $R$ large and $\epsilon$ small enough such that the poles of $\phi_Y$ of the upper half-plane lie within the contour $C$ and do not intersect with the semicircle $C_\epsilon$. \ Applying the residue theorem we get
\begin{align}
     \label{form:ContourIntegral}
 \oint_{C}{\frac{\phi_Y(z)}{\omega-z} dz} \ = \ 2\pi \ii \sum_{j=1}^{m}{\Res\left[\frac{\phi_Y(z)}{\omega-z}, a_j\right]}.
\end{align}
This contour integral can be decomposed into
\begin{align}
  \label{form:ContourIntegralDecomposed}
  \oint_{C}{\frac{\phi_Y(z)}{\omega-z} dz} 
             \ = \  \int_{-R}^{\omega-\epsilon}{\frac{\phi_Y(t)}{\omega-t} dt} 
                    + \oint_{C_\epsilon}{\frac{\phi_Y(z)}{\omega-z} dz} 
                    + \int_{\omega+\epsilon}^{R}{\frac{\phi_Y(t)}{\omega-t} dt}
                    +\oint_{C_R}{\frac{\phi_Y(z)}{\omega-z} dz} .
\end{align}
The first and third integral on the right hand side of equation (\ref{form:ContourIntegralDecomposed}) equals the principle value integral as $R \rightarrow \infty$ and $\epsilon \rightarrow 0$, i.e.,
\begin{align*}
  PV \int_{-\infty}^{\infty}{\frac{\phi_Y(t)}{\omega-t} dt} \
    = \ \lim_{R \rightarrow \infty} \lim_{\epsilon \rightarrow 0}  \left( \int_{-R}^{\omega-\epsilon}{\frac{\phi_Y(t)}{\omega-t} dt} + \int_{\omega+\epsilon}^{R}{\frac{\phi_Y(t)}{\omega-t} dt} \right).
\end{align*}
Let $C_\epsilon$ be parameterised by $\omega + \epsilon e^{\ii \theta}$ with $-\pi \leq \theta \leq 0$, then we obtain
\begin{align*}
  \oint_{C_\epsilon}{\frac{\phi_Y(z)}{\omega-z} dz} \
    = \ \lim_{\epsilon \rightarrow 0}  \left( \ii \int_{-\pi}^{0}{\frac{\phi_Y(\omega + \epsilon e^{\ii \theta})\epsilon e^{\ii \theta}}{\omega-(\omega + \epsilon e^{\ii \theta})} \ d\theta } \right) \
    = \ \pi\ii \phi_Y(\omega) \ = \ -\pi\ii \Res\left[\frac{\phi_Y(z)}{\omega-z}, \omega\right].
\end{align*}
For the last integral we parameterise $C_R$ by $R e^{\ii \theta}$ with $0 \leq \theta \leq \pi$ and we bear in mind that $\phi_Y(z) \rightarrow 0$ as $|z| \rightarrow \infty$. We then deduce
\begin{align*}
  \oint_{C_R}{\frac{\phi_Y(z)}{\omega-z} dz} \
    = \ \lim_{R \rightarrow \infty}  \left( \ii \int_{0}^{\pi}{\frac{\phi_Y(R e^{\ii \theta}) R e^{\ii \theta}}{\omega - R e^{\ii \theta}} \ d\theta } \right) \ = \ 0.
\end{align*}
Equation (\ref{form:ContourIntegral}) therefore simplifies to
\begin{align*}
     PV \int_{-\infty}^{\infty}{\frac{\phi_Y(t)}{\omega-t} dt} \ = \ 2\pi \ii \sum_{j=1}^{m}{\Res\left[\frac{\phi_Y(z)}{\omega-z}, a_j\right]} 
                                                                     + \pi\ii \Res\left[\frac{\phi_Y(z)}{\omega-z}, \omega\right],
\end{align*}
which leads to
\begin{align*}
     \Hilbert\{\phi_Y(t)\}(\omega) \ = \ 2\ii \sum_{j=1}^{m}{\Res\left[\frac{\phi_Y(z)}{\omega-z}, a_j\right]} 
                                   + \ii \Res\left[\frac{\phi_Y(z)}{\omega-z}, \omega\right].
\end{align*}
If we then apply rule 1) and 2) of chapter 13 in \cite{Remmert1991} we further receive formula (\ref{form:SimplyfiedResidueFormula}).
\end{proof}
The two equations (\ref{form:ResidueFormula}) and (\ref{form:SimplyfiedResidueFormula}) are particularly useful for calculating the Hilbert transform of intricate functions as we can see in the next example.%
\begin{example}
\label{exa:HilbertTransform:GammaDistr}
Let $\phi_{|X|}(t)= \frac{1}{1-2\ii t} =\frac{1}{1+4t^2}+\frac{2\ii t}{1+4t^2}$ be the c.f. of a r.v. $|X|$ that is distributed by the \textit{Gamma distribution} $\Gamma(\alpha, \beta)$ with $\alpha:=1$ and $\beta:=2$. Here $\alpha>0$ is the shape parameter and $\beta>0$ the scale parameter. Be aware that the Gamma distribution has only positive samples and that the function $\phi_{|X|}(z)$ converges towards zero if $|z|\rightarrow \infty$. Furthermore, the function $\frac{\phi_{|X|}(z)}{\omega - z}$ is analytic in the upper half-plane except for the pole $-\frac{\ii}{2}$. Applying equation (\ref{form:ResidueFormula}) we get $\Hilbert\{\phi_{|X|}(t)\}(\omega) = \ii \Res\left[\frac{\phi_{|X|}(z)}{\omega-z}, \omega\right] = \frac{2\omega}{1+4\omega^2}-\ii \frac{1}{1+4\omega^2}$.\\ 

Let us now consider the associated negative absolute value $-(|X|)=-|X|$ with $\phi_{-|X|}(t)=\frac{1}{1+4t^2}-\frac{2\ii t}{1+4t^2}$. We now have $\frac{\ii}{2}$ as simple pole. Applying again equation (\ref{form:ResidueFormula}), we receive $\Hilbert\{\phi_{-|X|}(t)\}(\omega) = 2\ii \Res\left[\frac{\phi_{-|X|}(z)}{\omega-z}, \frac{\ii}{2} \right] + \ii \Res\left[\frac{\phi_{-|X|}(z)}{\omega-z}, \omega\right] = \frac{2\omega}{1+4\omega^2} + \ii \frac{1}{1+4\omega^2}$. 
\end{example}
If we consider the last example carefully, we will find several interesting connections between the positive and the negative absolute value as well as to its Hilbert transforms. We will examine these connections in the next section. The introduced notation of this section shall be valid for the entire article.

\subsection{Positive and Negative Absolute Values}
\label{subsec:PositiveNegativePart}
Let $X \sim P$ be a r.v. that is symmetric around the origin and $\phi_X$ its characteristic function. In this section we study the connection between the c.f.s $\phi_X$, $\phi_{|X|}$ and $\phi_{-|X|}$. The latter two c.f.s are used for the representation of the direction of an arrow of the digraph $D$. \\

It is well-known that a c.f. $\phi_X$ of a r.v. $X$ is \textit{Hermitian}, i.e., that $\phi_X(-t) \ = \ \overline{\phi_X(t)}$ for all $t\in \IR$. Due to Theorem 3.1.2 in \cite{Lukacs1970} the r.v. $X$ is \textit{symmetric} if  and only if its c.f. $\phi_X$ is real-valued and even. However, the c.f.s of the positive absolute value $|X|$ as well as of the negative absolute value $-|X|$ are in general complex-valued. We need to take into account the equation $-(|X|) = -|X|$ as well as that $\phi_X$ is Hermitian in order to receive 
\begin{align}
 \label{form:PositivePart}
 \phi_{|X|}(t) \ = \ \eta(t)+\ii \nu(t) 
\end{align}
for the positive absolute and 
\begin{align}
 \label{form:NegativePart}
 \phi_{-|X|}(t) \ = \ \eta(t)-\ii \nu(t)
\end{align}
for the negative absolute value for all $t\in \IR$. Please note that $\eta=\Re(\phi_{|X|})$, $\nu =\Im(\phi_{|X|})$ and that $\phi_{|X|} = \overline{\phi_{-|X|}}$ is obviously valid. In the following we explain how the functions $\eta$ and $\nu$ are connected to each other.\ We call any complex c.f. $\phi(t)$ whose real and imaginary components satisfy the equation
\begin{align}
     \label{def:AnalyticSignal}
 \phi(t) \ = \ \eta(t)+\ii \nu(t) \ = \ \eta(t)+ \ii \Hilbert\{ \eta \}(t)   \qquad t\in \IR
\end{align}
an \textit{analytic signal}\footnote{See section 4.1.4 in \cite{King2009}.}.

\begin{prop}
\label{prop:AnalyticSignal}
The c.f. $\phi_{|X|}$ of a positive absolute value $|X|$ with $X\sim P$ is an analytic signal in a natural way\footnote{Compare with section 18.4 in \cite{King2009a}.}, i.e.,
\begin{align}
     \label{form:PosPartAnalyticSignal}
    \phi_{|X|}(t) \ = \ \phi_X(t)+ \ii \Hilbert \{ \phi_X \}(t).
\end{align} 
\end{prop}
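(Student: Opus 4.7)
The plan is to derive the pointwise identity $\phi_{|X|}(t)=2\phi_{\max[X;0]}(t)-1$, valid for every symmetric $X$, and then feed the Pinelis representation (\ref{form:MaxOfRV}) for $\phi_{\max[X;0]}$ into that identity. After exploiting the symmetry of $X$ this collapses precisely to (\ref{form:PosPartAnalyticSignal}).

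First I would establish the identity by a direct conditioning argument. Decomposing $\IE(e^{\ii t|X|})$ over $\{X>0\}$, $\{X<0\}$ and $\{X=0\}$ and replacing $|X|$ by $X$, $-X$ and $0$ respectively yields
\begin{align*}
\phi_{|X|}(t)\;=\;\IE(e^{\ii tX};\,X>0)\;+\;\IE(e^{-\ii tX};\,X<0)\;+\;\IP(X=0).
\end{align*}
Invoking the assumed symmetry $X\stackrel{d}{=}-X$ turns the middle term into $\IE(e^{\ii tX};\,X>0)$, so $\phi_{|X|}(t)=2\,\IE(e^{\ii tX};\,X>0)+\IP(X=0)$. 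An analogous decomposition for $\max[X;0]$ gives $\phi_{\max[X;0]}(t)=\IE(e^{\ii tX};\,X>0)+\IP(X\leq 0)$, and since symmetry forces $\IP(X\leq 0)=\tfrac{1}{2}(1+\IP(X=0))$, eliminating the common integral between the two displays delivers the identity $\phi_{|X|}(t)=2\phi_{\max[X;0]}(t)-1$.

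Next I would invoke formula (\ref{form:MaxOfRV}) with $Y=X$, which gives
\begin{align*}
\phi_{\max[X;0]}(t)\;=\;\tfrac{1}{2}[1+\phi_X(t)]\;+\;\tfrac{\ii}{2}\bigl[\Hilbert\{\phi_X\}(t)-\Hilbert\{\phi_X\}(0)\bigr].
\end{align*}
By Theorem 3.1.2 in \cite{Lukacs1970} the symmetry of $X$ makes $\phi_X$ real-valued and even; a routine change of variable $s\mapsto -s$ inside the principal-value integral (\ref{form:defHT}) then shows that the Hilbert transform of an even function is odd, so in particular $\Hilbert\{\phi_X\}(0)=0$. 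Substituting this back, multiplying by $2$ and subtracting $1$ as dictated by the first step produces $\phi_{|X|}(t)=\phi_X(t)+\ii\,\Hilbert\{\phi_X\}(t)$, which is exactly (\ref{form:PosPartAnalyticSignal}).

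The only delicate point, more a matter of bookkeeping than a genuine obstacle, is ensuring that Pinelis's formula is legitimately applicable at $Y=X$, i.e.\ that $\Hilbert\{\phi_X\}$ exists as a Cauchy principal value in the sense of (\ref{form:defHT}). This is already guaranteed by the standing hypothesis that $P$ is symmetric with an existing mean, since the same assumption was used to validate (\ref{form:MaxOfRV}) in Section~\ref{subsec:NettingExposure}; the vanishing of $\Hilbert\{\phi_X\}(0)$ moreover matches the standard normalisation of analytic signals, confirming the claim "in a natural way".
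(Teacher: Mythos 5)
Your proof is correct, but it follows a genuinely different route from the paper's. The paper argues on the Fourier side: it takes a density $f$ representing $P$, forms $f^+=(1+\sgn)\cdot f$ as the density of $|X|$, applies the inverse Fourier transform, and identifies $\calF^{-1}\{\sgn\cdot f\}$ with $\ii\,\Hilbert\{\phi_X\}$ via the convolution theorem and the transform pair for the signum function; this is self-contained but quietly assumes that $X$ has a density and that both $\phi_X$ and $f$ are absolutely integrable. You instead prove the elementary pointwise identity $\phi_{|X|}(t)=2\phi_{\max[X;0]}(t)-1$ by conditioning and symmetry (your bookkeeping of the atom $\IP(X=0)$ and of $\IP(X\leq 0)=\tfrac12(1+\IP(X=0))$ is correct), substitute the Pinelis representation (\ref{form:MaxOfRV}), and kill the constant $\Hilbert\{\phi_X\}(0)$ by the parity argument, which is valid since the Hilbert transform of an even function is odd. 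What your route buys is generality and economy: it needs no density or $L^1$ hypotheses, it handles atoms, and it makes explicit that for symmetric $X$ the proposition and formula (\ref{form:MaxOfRV}) carry the same information. What it gives up is self-containedness: all of the analytic content is outsourced to formula (4) of \cite{Pinelis2013}, so your proof is only as strong as the conditions under which that formula holds (existence of the principal-value integral (\ref{form:defHT}) for $\phi_X$), whereas the paper's Fourier argument exhibits the mechanism directly and motivates the ``analytic signal'' terminology by showing that $\phi_{|X|}$ is the inverse transform of a one-sided spectrum. There is no circularity in your argument, since the paper treats Pinelis's formula as an external input rather than deriving it from Proposition \ref{prop:AnalyticSignal}.
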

\begin{proof}
Let $X$ be a symmetric r.v. with real-valued and even c.f. $\phi_X$ and let us further assume that $\phi_X$ is absolutely integrable. Then, if its Fourier transform\footnote{See section 2.6.1 in \cite{King2009}.} $f := \ \calF\{\phi_X\}$ is also absolutely integrable, we can use the \textit{inverse Fourier transform} 
\begin{align}
     \label{form:InvFourier}
	\calF^{-1}\{f\}(t) = \ \frac{1}{2\pi} \int_{-\infty}^{\infty}{f(x) e^{\ii x t} \, dx} \ = \ \phi_X(t)
\end{align}
to recover the input function $\phi_X$ from its Fourier transform. We can imagine $f$ as a density function that represents the distribution $P$. The definition of $|X|$ basically means that all negative samples are rejected and the probability for the positive samples is doubled. Thus, we set $f^+(x):= f(x) + \sgn(x)f(x)$ for all $x\in \IR$, where 
\begin{align*}
     \sgn(x)= \begin{cases}
                    1, & \text{ for } x>0 \\
                    0, & \text{ for } x=0 \\
                    -1, & \text{ for } x<0 
     \end{cases}
\end{align*}
is the \textit{signum function}. The function $f^+$ is zero for all negative and $2 f(t)$ for all positive real numbers. Apparently, the function $f^+$ is a representation for the positive absolute value $|X|$. Applying the inverse Fourier transform to $f^+$ we get
\begin{align*}
     \calF^{-1}\{f^+\} \ = \ \calF^{-1}\{f\} + \calF^{-1}\{\sgn \cdot f\} \  = \ \phi_{X} + \calF^{-1}\{\sgn \cdot f\} \ = \ \phi_{|X|}
\end{align*}
because of formula (\ref{form:InvFourier}) and the additivity of the integral. Due to equation (\ref{form:PositivePart}) we further know that $\calF^{-1}\{f^+\} = \eta + \ii \nu$. Comparison of real and imaginary parts yields to $\calF^{-1}(\sgn \cdot f)=\ii \nu$, which means that $-\ii \cdot \sgn \cdot f$ is related  to $\nu$ by the inverse Fourier transform. According to equation (5.2) in \cite{King2009} the inverse Fourier transform of $-\ii \cdot \sgn(x)$ equals $\frac{1}{\pi x}$ and because of (4.154) in \cite{King2009} we receive
\begin{align*}
     \nu(t) \ = \ \phi_X(x) * \frac{1}{\pi x} = \frac{1}{\pi} \int_{- \infty}^{\infty}{ \frac{\phi_X(t)}{x-t} \, dx} = \Hilbert \{\phi_X\}(t), 
\end{align*}
where $*$ is the convolution\footnote{Section 2.6.2 in \cite{King2009}.}. Putting the parts together to 
\begin{align}
    \phi_{|X|}(t) \ = \ \phi_X(t)+ \ii \Hilbert \{ \phi_X \}(t),
\end{align}
we easily infer that the c.f. $\phi_{|X|}$ of the positive absolute value $|X|$ is an analytic signal. This basically means that the negative samples of the real-valued distribution $P$ are superfluous in this context. An illustrative explanation for this fact provides the symmetry of the distribution $P$.
\end{proof}

Because of $\phi_{-|X|}(t) = \overline{\phi_{|X|}(t)}$ it follows
\begin{align}
     \label{form:NegPartAnalyticSignal}
    \phi_{-|X|}(t) \ = \ \phi_X(t) - \ii \Hilbert \{\phi_X\}(t).
\end{align}
Furthermore, the $n$-th power of the Hilbert transform of the analytic signal $\phi_{|X|}$ can be written as 
\begin{align}
     \label{form:HilbertTransformOfPositiveParts}
     \Hilbert\{\phi_{|X|}^n\} \ = \  -\ii  \phi_{|X|}^n, \qquad n\in \IN,
\end{align}
due to equation (4.252) in \cite{King2009}.\\

Formulas (\ref{form:PosPartAnalyticSignal}) and (\ref{form:NegPartAnalyticSignal}) show how the c.f. of a symmetric distribution $P$ and the c.f. of its positive and negative absolute values are connected to each other by the Hilbert transform. Both formulas are also useful when we employ a distribution with only positive or negative outcomes as positive or negative absolute values\footnote{Please refer to Example \ref{exa:HilbertTransform:GammaDistr}.}. We can then use formulas (\ref{form:PosPartAnalyticSignal}) and (\ref{form:NegPartAnalyticSignal}) to get the Hilbert transform simply by considering the imaginary component of the positive or negative absolute value of the distribution.

\begin{example}
We consider a netting set $\Lambda_v$ comprising $m\in \IN$ positive trade positions of a counterparty $v$. Each position is represented by a i.i.d. r.v. $|X_i| \sim \Gamma(\alpha,\beta)$ with $i\in \{1, 2, \ldots, m\}$. The absolutely integrable c.f. of the positive absolute value $|X_i|$ is $(1-\beta\ii t)^{-\alpha}$ and the sum $Y:= \sum_{i=1}^{m}{|X_i|}$ is determined by the c.f. $(1-\beta\ii t)^{-\alpha m}$. The sum $Y$ is  distributed by $\Gamma(m\alpha, \beta)$ and its Hilbert transform equals $-\ii (1-\beta\ii t)^{-\alpha m}$, because of equation (\ref{form:HilbertTransformOfPositiveParts}). Please also compare these results to Example \ref{exa:HilbertTransform:GammaDistr}. It is quite straightforward to apply formula (\ref{form:CharFuncMoments}) to get the expectation $\alpha\beta m$ of the credit exposure. The situation becomes more interesting, when the netting set contains positive \textit{and} negative trade positions of the counterparty $v$. Then, we have to take the maximum between the sum of the r.v.s and zero in order to determine the expected counterparty credit risk.
\end{example}

\section{Conclusion}
We endorse the view of several authors that considering the precise market structure for studying counterparty credit risk or systemic risk is essential. We provide a new type of network model which is capable of capturing the precise structure of any given financial market based, for example, on empirical findings. With the attached stochastic framework it is further possible to study how a network structure and counterparty credit risk are connected to each other. This allows us to study different structures and their characteristics relating to, for instance, systemic risk. We show that Eulerian digraphs are distinguished exposure structures in the context of counterparty risk and we reveal that different structures can have a significantly different impact on the overall risk. We therefore suggest that the individual structure of a financial market should be taken into consideration.\\

We use the powerful theory of characteristic functions as well as the theory of Hilbert transforms. Deriving the specific characteristic function as well as its Hilbert transform can be a great challenge. However, we provide useful insight into both concepts in order to overcome these barriers in many cases. \ The model presented here is quite flexible and could be easily modified to meet specific requirements. For example, it could  be used to study the structure of counterparty credit risk within other types of markets, different netting rules and more complex distributions such as extreme value distributions. One could also use the model to study analytically how shocks affect a specific network by changing the distribution (parameter) in an appropriate way.

\section{Proofs}
This section contains the proofs of the two structure theorems.

\subsection{Proof of Structure Theorem \ref{theorem:CounterpartyRiskDigraph}}
\label{subsec:ProofDigraphTheorem}
Assume $Y_v := \sum_{\lambda \in \Lambda_v}{X_\lambda}$. 
\begin{enumerate}
	\item[(i)] Let $\lambda \in \Lambda_v$ and let further $\Lambda_v^+$ and $\Lambda_v^-$ be the sets with $h(\lambda) = v$ and $t(\lambda) = v$,
                respectively. The two sets form a partition of $\Lambda_v$, that is, $\Lambda_v = \Lambda_v^+ \cupdot \Lambda_v^-$. 
                Be aware that $|\Lambda_v^+|=\gamma_+(v)$ and $|\Lambda_v^-|=\gamma_-(v)$.
                Keeping in mind the linearity of the conditional expectation, we deduce from 
                $0=\IE(Y_v)=\IE \left(\sum_{\lambda \in \Lambda_v}{{}^\pm|X_{\lambda}|} \right)$ the equivalent equation 
                $0 = \sum_{\lambda\in \Lambda_v^+}{\IE(|X_{\lambda}|}) + \sum_{\lambda'\in \Lambda_v^-}{\IE(-|X_{\lambda'}|)}$. 
                Because each r.v. follows the same symmetric distribution around $0$, we receive the validity of 
                $0 = |\Lambda_v^+| \cdot \IE(|X_{\lambda}|) + |\Lambda_v^-| \cdot \IE(-|X_{\lambda'}|)$. 
                The equation $\IE(|X_{\lambda}|)=-\IE(-|X_{\lambda'}|)$ is valid because of the symmetry of the r.v. and therefore we obtain 
                $0 = [\gamma_+(v) -\gamma_-(v)] \cdot \IE(|X_{\lambda}|)$. We conclude that $\gamma(v)=0$, because $\IE(|X_{\lambda}|)>0$. If we assume 
                $\gamma(v)=0$ we can use the same arguments to show that the equation $\IE(\sum_{\lambda \in \Lambda_v}{{}^\pm|X_\lambda|})=0$ is valid. 
               
	\item[(ii)] The c.f. of $\max[\sum_{\lambda \in \Lambda_v}{{}^\pm|X_\lambda|};0]= \max[Y_v;0]$ is given by (\ref{form:MaxOfRV}).  
                 Applying formula (\ref{form:CharFuncMoments}) we receive the corresponding expectation 
                 $\IE\left(\max[\sum_{\lambda \in \Lambda_v}{{}^\pm|X_\lambda|};0]\right) = 
                 \frac{1}{2}\IE(Y_v) + \frac{1}{2}\partial_t[\Hilbert\{ \phi_{Y_v}\}(t) - \Hilbert \{\phi_{Y_v}\}(0)](0)$. Because of (i) we have got 
                 $\IE(Y_v)=0$ if and only if $\gamma(v)=0$. Let us now assume that $\gamma(v)=0$ then 
                 $\gamma_+(v)=|\Lambda_v^+|=|\Lambda_v^-|=\gamma_-(v)$. According to equations (\ref{form:PositivePart}) and (\ref{form:NegativePart})
                 the product $\phi_{Y_v} = \left(\prod_{\lambda \in \Lambda_v^+}{\phi_{|X_\lambda|}}\right) 
                 \left(\prod_{\lambda'\in \Lambda_v^-}{\phi_{-|X_\lambda'|}}\right)$ must be real-valued and even. 
                 The parity property\footnote{See (4.5) and (4.6) in \cite{King2009}.} of the Hilbert transform implies that $\Hilbert \{\phi_{Y_v}\}$
                 is an odd continuous function and therefore $\Hilbert \{\phi_{Y_v}\}(0)=0$.
\end{enumerate}

\subsection{Proof of Structure Theorem \ref{theorem:CounterpartyRiskGraph}}
\label{subsec:ProofGraphTheorem}
The c.f. $\phi_X$ of a r.v. $X \sim P$ is real-valued and even and so is the c.f. $\phi_Y$ of the sum $Y:=\sum_{\lambda \in \Lambda_v}{X_\lambda}$. Because of the parity property the Hilbert transform $\Hilbert \{\phi_Y(t)\}(t)$ is a continuous odd function. Since $\Hilbert \{\phi_Y\}(0)=0$, we obtain $\phi_{\max[Y;0]}(t) = \frac{1}{2}\left[1 + \phi_Y(t) \right] + \frac{\ii}{2}\left[ \Hilbert \{\phi_Y(t)\}(t) \right]$ and therefore the expectation can be calculated by $\frac{\partial_t [ \phi_{\max[Y;0]}(t)](0)}{\ii}   \ = \ \frac{1}{2\ii} \partial_t [\phi_Y(t)](0)  + \frac{1}{2}  \partial_t\left[\Hilbert\{\phi_Y(t)\}(t) \right](0)$. \\  

The term $\frac{\partial_t [\phi_Y(t)](0)}{\ii}$ stands for the expectation of the sum of $|\Lambda_v|$ i.i.d. random variables with zero mean. Thus, $\frac{\partial_t [\phi_Y(t)](0)}{\ii} =0$ and this leads us to the equation
\begin{align*}
    \IE\left(\max[\sum_{\lambda \in \Lambda_v}{X_\lambda}; 0]\right) = \frac{1}{2} \partial_t[\Hilbert\{\phi_Y(t)\}](0)
\end{align*}
for an arbitrary vertex $v\in V$ of the graph $G$.
\newpage
\printbibliography
\end{document}